% This is samplepaper.tex, a sample chapter demonstrating the
% LLNCS macro package for Springer Computer Science proceedings;
% Version 2.20 of 2017/10/04
%
\PassOptionsToPackage{prologue,usenames,dvipsnames,table}{xcolor}
\PassOptionsToPackage{bookmarks,unicode,colorlinks=true}{hyperref}
\newif\ifanonymous

%%switch to \anonymoustrue in submission
\anonymousfalse
%\anonymoustrue
% olp_8on1YnfGYKqZUGFIdJUcc5o6ZP6JHa2LEs3l
\documentclass[runningheads]{llncs}

\usepackage{amssymb,amsmath,mathtools}
\usepackage{nicefrac}
\usepackage{mathrsfs}
\usepackage{xspace}
\usepackage[inline]{enumitem}
\usepackage{comment}
\usepackage{thm-restate}
\usepackage{graphicx}
\usepackage[colorlinks=true,linkcolor=violet,citecolor=blue]{hyperref}
\usepackage[capitalize,nameinlink]{cleveref}
\usepackage{adjustbox}
\usepackage{bm}
\usepackage{tikz}
\usepackage{ifthen}
%\usetikzlibrary{patterns}
\usetikzlibrary {arrows.meta}
\usetikzlibrary{backgrounds}
\usetikzlibrary{arrows,shapes}
\usetikzlibrary{tikzmark}
\usetikzlibrary{calc}
\usepackage{utfsym}
\usepackage{bussproofs}
\usepackage{latexsym} % More elegant symbol of eventually and always
\usepackage{tcolorbox}
\usepackage{algorithm}
\usepackage{algpseudocode}
\usepackage{booktabs}
\usepackage{caption}
\usepackage{wrapfig}
\usepackage{makecell}
\usepackage{multirow}
\usepackage{bbding}
\usepackage[misc]{ifsym}
\usepackage{orcidlink}

\spnewtheorem{subproblem}{Problem~1.\!\!}{\itshape}{}

\Crefname{problem}{Problem}{Problem}
\Crefname{subproblem}{Problem~1.\!\!}{Problem~1.\!\!}
%\Crefname{align}{Eq.}{Eq.}
\crefname{lemma}{Lem.}{Lem.}
\crefname{example}{Exmp.}{Exmp.}
\crefname{section}{Sect.}{Sect.}
\Crefname{appendix}{Appx.}{Appx.}
%\crefname{appendix}{App.}{App.}
\crefname{definition}{Def.}{Def.}
\crefname{theorem}{Thm.}{Thm.}
\crefname{corollary}{Cor.}{Cor.}
\crefname{algorithm}{Alg.}{Alg.}

\input{macros.tex}
\input{tikzkey.tex}
% Used for displaying a sample figure. If possible, figure files should
% be included in EPS format.
%
% If you use the hyperref package, please uncomment the following line
% to display URLs in blue roman font according to Springer's eBook style:

\makeatletter
\def\thanks#1{\protected@xdef\@thanks{\@thanks
        \protect\footnotetext{#1}}}
\makeatother

\begin{document}
\title{Switching Controller Synthesis for Hybrid Systems Against STL Formulas\thanks{ Funding of this paper}}
%\thanks{Supported by organization x.}}
%
\titlerunning{Switching Controller Synthesis for HSs Against STL Formulas}
% If the paper title is too long for the running head, you can set
% an abbreviated paper title here
%

\ifanonymous
%    \author{Anonymous Author(s)}
    \author{}
    \institute{}
\else
    \author{Han Su\inst{1}\orcidlink{0000-0003-4260-8340} 
    \and
    Shenghua Feng\inst{2}\inst{(}\Envelope\inst{)}\orcidlink{0000-0002-5352-4954} 
    \and
    Sinong Zhan\inst{3}\orcidlink{0000-0002-5750-3296} \and
    Naijun Zhan\inst{4,1}\orcidlink{0000-0003-3298-3817}
    }
    \authorrunning{H.~Su et al.}

    \institute{Institute of Software, CAS, University of Chinese Academy of Sciences, Beijing, China 
    \email{\{suhan,znj\}@ios.ac.cn}
    \and
    Zhongguancun Laboratory, Beijing, China
    \email{fengsh@zgclab.edu.cn}\and
    Department of Electrical and Computer Engineering, Northwestern University
    \email{SinongZhan2028@u.northwestern.edu}\and
    School of Computer Science, Peking University, Beijing, China\\
    }
\fi
\maketitle              % typeset the header of the contribution

\begin{abstract}

Switching controllers play a pivotal role in directing hybrid systems (HSs) towards the desired objective, embodying a ``correct-by-construction'' approach to HS design. Identifying these objectives is thus crucial for the synthesis of effective switching controllers. While most of existing works focus on safety and liveness, few of them consider timing constraints. In this paper, we delves into the synthesis of switching controllers for HSs that meet system objectives given by a fragment of STL, which essentially corresponds to a reach-avoid problem with timing constraints. Our approach involves iteratively computing the state sets that can be driven to satisfy the reach-avoid specification with timing constraints. This technique supports to create switching controllers for both constant and non-constant HSs. We validate our method's soundness, and confirm its relative completeness for a certain subclass of HSs. Experiment results affirms the efficacy of our approach.

\end{abstract}
\keywords{Hybrid Systems  \and Switching Controller Synthesis \and Signal Temporal Logic \and Reach-Avoid. }
\section{Introduction}

    Hybrid systems (HSs) provide a robust mathematical specification in modeling cyber-physical systems (CPS) with their unique fusion of continuous physical dynamics and discrete switching behaviors. Many CPSs are often complex and safety-critical which necessitates intricate control specifications. Switching controller synthesis offers a formal guarantee of the given specification of HS. 
    %by refining the guard associated with each discrete transition and the domain constraint associated with each mode. 
    Its applications include attitude control in aerospace\cite{atkins2013aerospace}, aircraft collision-avoidance protocols in avionics\cite{tomlin2000game}, and pacemakers for treating bradycardia\cite{ye2008modelling}, etc.

%    Nowadays, some CPSs have pre-defined continuous behaviors for a limited variety of contexts, makes them generally incapable of overcoming changing objectives. One promising solution is switching controller synthesis, which gives a formal guarantee that the hybrid system fulfill a specification after system's continuous behavior is determind. Applications of this include the attitude control in aerospace\cite{atkins2013aerospace}, aircraft collision-avoidance protocols in avionics\cite{tomlin2000game}, and pacemakers for treating bradycardia\cite{ye2008modelling}. 
  
    With the escalating complexity of CPSs, the specifications required to ensure their proper functionality grow increasingly intricate. Among these, the importance of timing constraints becomes paramount. This is evident in various scenarios, from orchestrating synchronized reactions in chemical processing\cite{engell2000continuous} to ensuring seamless operations in multi-robot systems\cite{lindemann2019coupled}. In this context, Signal Temporal Logic (STL), a rigorous formalism for defining linear-time properties of continuous signals\cite{maler2004monitoring}, is exceptionally well-suited for specifying intricate timing constraints and qualitative properties of complex CPSs.

    However, switching controller synthesis for HSs \reviewercomment{double check to unify all the Hybrid Systems as HSs} against STL specifications is not well addressed in the literature.
    %\reviewercomment{Such argument seems too strong, as you cite a study on this topic later in related work. It is better to argue your merits relative to that work} 
    The primary challenge arises from the complex interactions between continuous behaviors and discrete transitions. 
    %The synthesized controllers must ensure the system adheres to both the timing constraints and qualitative properties specified  by STL, before and after discrete transitions. 
    A common technique to synthesize switching controllers for HSs with complex specifications is the abstraction-based method\cite{liu2013synthesis,mazo2010pessoa}. This technique involves abstracting the continuous state space of each mode into a finite set of states, which often results in the loss of precise timing information for each mode. 
    %However, it encounters limitations with STL due to its abstraction approach. 
    Consequently, the abstraction-based technique struggles with timing constraint analysis in the abstracted state space. In contrast, Mixed Integer Linear Programming (MILP) based technique \cite{raman2014model}
    for switching controller synthesis against STL specification can provide precise timing information, but this method faces challenges in handling the intricate interactions of diverse discrete transitions between modes.

    In this paper, we considered the switching controller synthesis problem for HSs against a fragment of STL specification, which essentially corresponds to a reach-avoid problem with timing constraints. To the best of our knowledge, this is the first work that uses STL to specify HSs with both discrete transitions and continuous dynamics. Similar work in \cite{da2021symbolic} focused only on HSs with discrete time dynamics in each mode, significantly simplifying the problem. The key idea behind our approach involves iteratively computing a sequence of state-time sets $(x,t)$, state $x$ and time $t$. These sets ensure that an HS, starting from state $x$ at time $t$, adheres to the STL specification within a certain number (i.e., the number of iterations) of switches. The state-time sets are computed explicitly when the dynamics of the HSs are constant, and are inner-approximated when the dynamics are non-constant. Based on the state-time sets, we propose a sound and relatively complete method to synthesize a switching controller that satisfies the STL specification. Our experimental results demonstrate the efficacy of this approach.%\nzcomment{Redundant, simplify the expression}

    The main contributions can be summarized as follows: 
    \begin{enumerate*}[label=(\roman*)]
        \item We conceptualize \emph{state-time set} for HSs. 
        %originating from where the executions can satisfy a given STL formula within a predetermined number of switches.
        \item We propose a methodology to synthesize switching controllers for HSs against a fragment of STL specification.
        \item We develop a prototype to demonstrate the efficiency and practical applicability of our methodology.
    \end{enumerate*} 

    \paragraph*{\textbf{Organization.}} \cref{sec:running-exp} gives an overview of our approach, \cref{sec:priliminary} provides a recap of important preliminaries and formally defines the problem. We illustrate the calculation of the state-time sets in \cref{sec:state-time}.  Based on the state-time sets, \cref{sec:prob1} shows how to derive switching systems against a STL specification. In \cref{sec:exp}, we demonstrate the efficacy of our method through several examples. 
    %\cref{subsec:add-usage} discusses the  an additional usage of state-time set. 
    We discuss related work in \cref{sec:related} and draw conclusion in \cref{sec:conclu}.
%    Abstraction based method can not work for STL, -- the time of the continuous behavior need to be considered.
%
%    Why add time into guard
%    \begin{itemize}
%        \item The syntax of STL naturally contain time;
%        \item Synthesis controller without time may result in a constrained system.
%    \end{itemize}
%
%    Example...
%    
%    Contribution...
%    
%    It is essential to consider time-dependent jump conditions when synthesizing controller for hybrid system with respect to a signal temporal logic specification. (illustrative example)
%
%    In some situations, the low-level plant controllers (such as PID controllers) have been designed before, but some new specifications are added to the CPS, then we can only modify the "cyber" part of the system to satisfying the given conditions. For instance...  
    
\section{An Illustrative Prelude}\label{sec:running-exp}

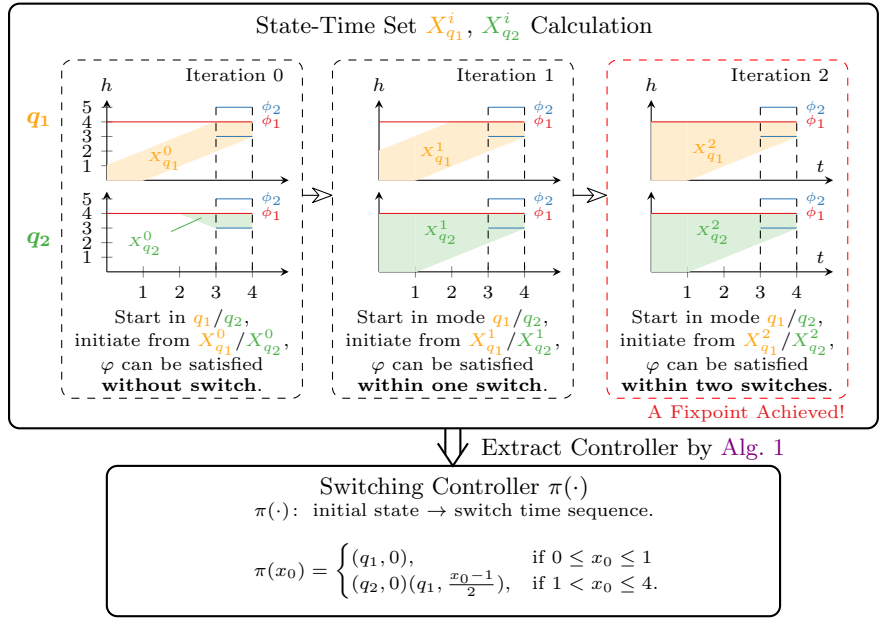
\begin{figure}[t]
		\vspace{-0.3cm}
        \centering
        \begin{adjustbox}{max width = 1\linewidth}
		    \begin{tikzpicture}[font=\scriptsize]
		        \begin{groupplot}[ Axis Set, width = 4cm , height = 2.65cm , group style={group size = 3 by 2, horizontal sep = 1.2cm, vertical sep = 0.15cm}]
		            %%%X_{q_1}^0
                    \nextgroupplot[xmax=5, ymax=5.5,ylabel=$h$,ytick={1,2,3,4,5},xtick=\empty]
                    % Guard conditions
                    \addplot[name path = G1U, domain = {0:1}, color = myorange!20] {x+1};
                    \addplot[name path = G1L, domain = {0:1}, color = myorange!20] {0};
                    \addplot[fill=myorange!20]fill between[of=G1U and G1L, soft clip={domain=0:1}];
                    \addplot[name path = G2U, domain = {1:3}, color = myorange!20] {x+1};
                    \addplot[name path = G2L, domain = {1:3}, color = myorange!20] {x-1};
                    \addplot[fill=myorange!20]fill between[of=G2U and G2L, soft clip={domain=1:3}];
                    \addplot[name path = G3U, domain = {3:4}, color = myorange!20] {4};
                    \addplot[name path = G3L, domain = {3:4}, color = myorange!20] {x-1};
                    \addplot[fill=myorange!20]fill between[of=G3U and G3L, soft clip={domain=3:4}];
                    \addplot[name path =pp, draw = none] (1.6,1.5) node {\orange{\tiny{$X_{q_1}^0$}}};
                    % The STL specification
		            \addplot[name path = Phi2Up,  domain = {3:4}, color=myblue] {5} node [right] {\tiny{$\phi_2$}};
		            \addplot[name path = Phi2Low, domain = {3:4}, color=myblue] {3};
                    \addplot[name path = Phi1, domain = {0:4}, color = myred] {4} node [right] {\tiny{$\phi_1$}};
                    %\addplot[fill=myblue!30]fill between[of=Phi2Up and Phi2Low, soft clip = {domain=3:4}];
                    % Dash line
		            \draw[dashed] ($(axis cs:3,\pgfkeysvalueof{/pgfplots/ymax})!{1/11}!(axis cs:3,\pgfkeysvalueof{/pgfplots/ymin})$) -- (axis cs:3,\pgfkeysvalueof{/pgfplots/ymin});
		            \draw[dashed] ($(axis cs:4,\pgfkeysvalueof{/pgfplots/ymax})!{1/11}!(axis cs:4,\pgfkeysvalueof{/pgfplots/ymin})$) -- (axis cs:4,\pgfkeysvalueof{/pgfplots/ymin});

                    %%% X_{q_1}^1 
                    \nextgroupplot[ymax=5.5, xmax=5, ylabel=$h$, ytick=\empty, xtick=\empty]
                    % Guard conditions
                    \addplot[name path = G1U, domain = {0:1}, color = myorange!20] {x+2};
                    \addplot[name path = G1L, domain = {0:1}, color = myorange!20] {0};
                    \addplot[fill=myorange!20]fill between[of=G1U and G1L, soft clip={domain=0:1}];
                    \addplot[name path = G2U, domain = {1:2}, color = myorange!20] {x+2};
                    \addplot[name path = G2L, domain = {1:2}, color = myorange!20] {x-1};
                    \addplot[fill=myorange!20]fill between[of=G2U and G2L, soft clip={domain=1:2}];
                    \addplot[name path = G3U, domain = {2:4}, color = myorange!20] {4};
                    \addplot[name path = G3L, domain = {2:4}, color = myorange!20] {x-1};
                    \addplot[fill=myorange!20]fill between[of=G3U and G3L, soft clip={domain=2:4}];
                    \addplot[name path =pp, draw = none] (1.6,1.8) node {\orange{\tiny{$X_{q_1}^1$}}};
                     % The STL specification
		            \addplot[name path = Phi2Up,  domain = {3:4}, color=myblue] {5} node [right] {\tiny{$\phi_2$}};
		            \addplot[name path = Phi2Low, domain = {3:4}, color=myblue] {3};
                    \addplot[name path = Phi1, domain = {0:4}, color = myred] {4} node [right] {\tiny{$\phi_1$}};
                    %\addplot[fill=myblue!30]fill between[of=Phi2Up and Phi2Low, soft clip = {domain=3:4}];
                    % Dash line
		            \draw[dashed] ($(axis cs:3,\pgfkeysvalueof{/pgfplots/ymax})!{1/11}!(axis cs:3,\pgfkeysvalueof{/pgfplots/ymin})$) -- (axis cs:3,\pgfkeysvalueof{/pgfplots/ymin});
		            \draw[dashed] ($(axis cs:4,\pgfkeysvalueof{/pgfplots/ymax})!{1/11}!(axis cs:4,\pgfkeysvalueof{/pgfplots/ymin})$) -- (axis cs:4,\pgfkeysvalueof{/pgfplots/ymin});

                    %%% X_{q_1}^2 
                    \nextgroupplot[ymax=5.5, xmax=5, xlabel=$t$, ylabel=$h$, ytick=\empty, xtick=\empty]
                    % Guard conditions
                    \addplot[name path = G1U, domain = {0:1}, color = myorange!20] {4};
                    \addplot[name path = G1L, domain = {0:1}, color = myorange!20] {0};
                    \addplot[fill=myorange!20]fill between[of=G1U and G1L, soft clip={domain=0:1}];
                    \addplot[name path = G2U, domain = {1:4}, color = myorange!20] {4};
                    \addplot[name path = G2L, domain = {1:4}, color = myorange!20] {x-1};
                    \addplot[fill=myorange!20]fill between[of=G2U and G2L, soft clip={domain=1:4}];
                    \addplot[name path =pp, draw = none] (1.6,2) node {\orange{\tiny{$X_{q_1}^2$}}};
                     % The STL specification
		            \addplot[name path = Phi2Up,  domain = {3:4}, color=myblue] {5} node [right] {\tiny{$\phi_2$}};
		            \addplot[name path = Phi2Low, domain = {3:4}, color=myblue] {3};
                    \addplot[name path = Phi1, domain = {0:4}, color = myred] {4} node [right] {\tiny{$\phi_1$}};
                    %\addplot[fill=myblue!30]fill between[of=Phi2Up and Phi2Low, soft clip = {domain=3:4}];
                    % Dash line
		            \draw[dashed] ($(axis cs:3,\pgfkeysvalueof{/pgfplots/ymax})!{1/11}!(axis cs:3,\pgfkeysvalueof{/pgfplots/ymin})$) -- (axis cs:3,\pgfkeysvalueof{/pgfplots/ymin});
		            \draw[dashed] ($(axis cs:4,\pgfkeysvalueof{/pgfplots/ymax})!{1/11}!(axis cs:4,\pgfkeysvalueof{/pgfplots/ymin})$) -- (axis cs:4,\pgfkeysvalueof{/pgfplots/ymin});

                    %%% X_{q_2}^0
                    \nextgroupplot[xmax=5, ymax=5.5,ytick={1,2,3,4,5}]
                    % Guard conditions
                    \addplot[name path = G1U, domain = {2:3}, color = mygreen!20] {4};
                    \addplot[name path = G1L, domain = {2:3}, color = mygreen!20] {-x+6};
                    \addplot[fill=mygreen!20]fill between[of=G1U and G1L, soft clip={domain=2:3}];
                    \addplot[name path = G2U, domain = {3:4}, color = mygreen!20] {4};
                    \addplot[name path = G2L, domain = {3:4}, color = mygreen!20] {3};
                    \addplot[fill=mygreen!20]fill between[of=G2U and G2L, soft clip={domain=3:4}];
                    \node[coordinate, pin={[pin edge={mygreen}]-170:{{\tiny{\green{$X_{q_2}^0$}}}}}] at (axis cs:2.6, 3.7){};
                    % The STL specification
		            \addplot[name path = Phi2Up,  domain = {3:4}, color=myblue] {5} node [right] {\tiny{$\phi_2$}};
		            \addplot[name path = Phi2Low, domain = {3:4}, color=myblue] {3};
                    \addplot[name path = Phi1, domain = {0:4}, color = myred] {4} node [right] {\tiny{$\phi_1$}};
                    %\addplot[fill=myblue!30]fill between[of=Phi2Up and Phi2Low, soft clip = {domain=3:4}];
                    % Dash line
		            \draw[dashed] ($(axis cs:3,\pgfkeysvalueof{/pgfplots/ymax})!{1/11}!(axis cs:3,\pgfkeysvalueof{/pgfplots/ymin})$) -- (axis cs:3,\pgfkeysvalueof{/pgfplots/ymin});
		            \draw[dashed] ($(axis cs:4,\pgfkeysvalueof{/pgfplots/ymax})!{1/11}!(axis cs:4,\pgfkeysvalueof{/pgfplots/ymin})$) -- (axis cs:4,\pgfkeysvalueof{/pgfplots/ymin});

                    %%% X_{q_2}^1
                    \nextgroupplot[ymax=5.5, xmax=5, ytick=\empty]
                    % Guard conditions
                    \addplot[name path = G1U, domain = {0:1}, color = mygreen!20] {4};
                    \addplot[name path = G1L, domain = {0:1}, color = mygreen!20] {0};
                    \addplot[fill=mygreen!20]fill between[of=G1U and G1L, soft clip={domain=0:1}];
                    \addplot[name path = G2U, domain = {1:4}, color = mygreen!20] {4};
                    \addplot[name path = G2L, domain = {1:4}, color = mygreen!20] {x-1};
                    \addplot[fill=mygreen!20]fill between[of=G2U and G2L, soft clip={domain=1:4}];
                    \addplot[name path =pp, draw = none] (1.7,2.7) node {\green{\tiny{$X_{q_2}^1$}}};
                     % The STL specification
		            \addplot[name path = Phi2Up,  domain = {3:4}, color=myblue] {5} node [right] {\tiny{$\phi_2$}};
		            \addplot[name path = Phi2Low, domain = {3:4}, color=myblue] {3};
                    \addplot[name path = Phi1, domain = {0:4}, color = myred] {4} node [right] {\tiny{$\phi_1$}};
                    %\addplot[fill=myblue!30]fill between[of=Phi2Up and Phi2Low, soft clip = {domain=3:4}];
                    % Dash line
		            \draw[dashed] ($(axis cs:3,\pgfkeysvalueof{/pgfplots/ymax})!{1/11}!(axis cs:3,\pgfkeysvalueof{/pgfplots/ymin})$) -- (axis cs:3,\pgfkeysvalueof{/pgfplots/ymin});
		            \draw[dashed] ($(axis cs:4,\pgfkeysvalueof{/pgfplots/ymax})!{1/11}!(axis cs:4,\pgfkeysvalueof{/pgfplots/ymin})$) -- (axis cs:4,\pgfkeysvalueof{/pgfplots/ymin});

                    %%% The possible trajectories from initial condition [3,4] in q2
                    \nextgroupplot[ymax=5.5, xmax=5, xlabel=$t$, ytick=\empty]
                    % Guard conditions
                    \addplot[name path = G1U, domain = {0:1}, color = mygreen!20] {4};
                    \addplot[name path = G1L, domain = {0:1}, color = mygreen!20] {0};
                    \addplot[fill=mygreen!20]fill between[of=G1U and G1L, soft clip={domain=0:1}];
                    \addplot[name path = G2U, domain = {1:4}, color = mygreen!20] {4};
                    \addplot[name path = G2L, domain = {1:4}, color = mygreen!20] {x-1};
                    \addplot[fill=mygreen!20]fill between[of=G2U and G2L, soft clip={domain=1:4}];
                    \addplot[name path =pp, draw = none] (1.7,2.7) node {\green{\tiny{$X_{q_2}^2$}}};
                     % The STL specification
		            \addplot[name path = Phi2Up,  domain = {3:4}, color=myblue] {5} node [right] {\tiny{$\phi_2$}};
		            \addplot[name path = Phi2Low, domain = {3:4}, color=myblue] {3};
                    \addplot[name path = Phi1, domain = {0:4}, color = myred] {4} node [right] {\tiny{$\phi_1$}};
                    %\addplot[fill=myblue!30]fill between[of=Phi2Up and Phi2Low, soft clip = {domain=3:4}];
                    % Dash line
		            \draw[dashed] ($(axis cs:3,\pgfkeysvalueof{/pgfplots/ymax})!{1/11}!(axis cs:3,\pgfkeysvalueof{/pgfplots/ymin})$) -- (axis cs:3,\pgfkeysvalueof{/pgfplots/ymin});
		            \draw[dashed] ($(axis cs:4,\pgfkeysvalueof{/pgfplots/ymax})!{1/11}!(axis cs:4,\pgfkeysvalueof{/pgfplots/ymin})$) -- (axis cs:4,\pgfkeysvalueof{/pgfplots/ymin});
		        \end{groupplot}        
                
                \draw[thick, rounded corners] (-1.3,2.35) |- (10.25,-3.3) -- (10.25,2.35) -- cycle;
                \node at(4.62,2.05) {\small{State-Time Set \orange{$X_{q_1}^i$}, \green{$X_{q_2}^i$} Calculation}}; 
                \node at(-0.9,0.8) {\small{\orange{$\bm{q_1}$}}};
                \node at(-0.9,-0.8) {\small{\green{$\bm{q_2}$}}};
                
                \draw[dashed,rounded corners] (-0.6,1.6) |- (2.6,-2.9) -- (2.6,1.6) -- cycle; 
                \node at(1.7,1.4) {Iteration 0};
                \node at(1,-2.3) {\scriptsize{\makecell[c]{Start in \orange{$q_1$}/\green{$q_2$},\\ 
                                                    initiate from \orange{$X_{q_1}^0$}/\green{$X_{q_2}^0$},\\ 
                                                    $\varphi$ can be satisfied\\ 
                                                    \textbf{without switch}.}}};
                \draw [-{Stealth[length=3mm, open, round]}] (2.6,-0.2) -- (3,-0.2);

                \draw[dashed, rounded corners] (3,1.6) |- (6.2,-2.9) -- (6.2,1.6) -- cycle;
                \node at(5.3,1.4) {Iteration 1};
                \node at(4.6,-2.3) {\scriptsize{\makecell[c]{Start in mode \orange{$q_1$}/\green{$q_2$},\\ 
                                                    initiate from \orange{$X_{q_1}^1$}/\green{$X_{q_2}^1$},\\ 
                                                    $\varphi$ can be satisfied\\ 
                                                    \textbf{within one switch}.}}};
                \draw [-{Stealth[length=3mm, open, round]}] (6.2,-0.2) -- (6.65,-0.2);
                
                \draw[dashed, rounded corners, color=red] (6.65,1.6) |- (9.85,-2.9) -- (9.85,1.6) -- cycle;
                \node at(8.95,1.4) {Iteration 2};
                \node at(8.25,-2.3) {\scriptsize{\makecell[c]{Start in mode \orange{$q_1$}/\green{$q_2$},\\ 
                                                    initiate from \orange{$X_{q_1}^2$}/\green{$X_{q_2}^2$},\\ 
                                                    $\varphi$ can be satisfied\\ 
                                                    \textbf{within two switches}.}}};
                \node at(8.5,-3.1) {\red{A Fixpoint Achieved!}};
                
                \draw[thick] (4.5,-3.3) -- (4.5,-3.65);
                \draw[thick] (4.7,-3.3) -- (4.7,-3.65);
                \draw[thick] (4.4,-3.5) -- (4.6,-3.8) -- (4.8,-3.5);
                \node at(7,-3.58) {\small{Extract Controller by \cref{alg:syn-ss}}};
                \begin{scope}
                    \draw[thick, rounded corners] (0,-3.8) |- (8.95,-5.8) -- (8.95,-3.8) -- cycle;
                    \node at(4.62,-4.1) {\small{Switching Controller $\pi(\cdot)$}};
                    %\node at(1.45,-4.3) {(Switching Time Controller $\pi(\cdot)$ )};
                    \node at(4.62,-4.4) {{$\pi(\cdot)\colon$ initial state $\to$ switch time sequence.}};
                    \node at(4.62,-5.2) {{
                    $\pi(x_0) = 
                    \begin{cases}
                        (q_1,0), & \text{if $0\leq x_0 \leq 1$}\\
                        (q_2,0) (q_1, \frac{x_0-1}{2}), &  \text{if $1< x_0 \leq 4$}.
                    \end{cases}$
                    }};
                \end{scope}
		    \end{tikzpicture}
        \end{adjustbox}
%        \begin{adjustbox}{max width = 0.5\linewidth}
%            \begin{tikzpicture}[oriented WD, align=center, bbx=1.3cm, bby=.5cm,font=\scriptsize]
%                    \node[bb={1}{1}, bb min width=.9in] (q1) {\bm{$q_1:$} K1 ON, K2 OFF\\$\Flow[q_1]:\dot x=1$\\$\Init[q_1]:0\le x\le 3$\\$\Inv[q_1]:0\le x\le 6$};
%                    \node[bb={1}{1}, bb min width=.9in, below = .8cm of q1] (q2) {\bm{$q_2:$} K1 OFF, K2 ON\\$\Flow[q_1]:\dot x=-1$\\$\Init[q_1]:3\le x\le 6$\\$\Inv[q_2]:0\le x\le 6$};
%                    \draw[ar] let \p1=(q1.west), \p2=(q2.west), \n1={3*\bbportlen} in (q1_in1) -- (\x1-\n1,\y1) -- node[fill=white, inner sep=1pt] {$e_1=(q_1,q_2)$\\$\Jump[e_1]:x\ge 4\wedge x=x'$} (\x2-\n1,\y2) -- (q2_in1);
%                    \draw[ar] let \p1=(q1.east), \p2=(q2.east), \n1={3*\bbportlen} in (q2_out1) -- (\x2+\n1,\y2)  -- node[fill=white,inner sep=0.8pt] {$e_2={q_2,q_1}$\\$\Jump[e_2]:x\le 1\wedge x=x'$} (\x1+\n1,\y1) -- (q1_out1);
%            \end{tikzpicture}
%        \end{adjustbox}
%
        \caption{An overview of our method}
        \label{fig:overview}
    \end{figure}
  % We use the following example to give an illustrative prelude of our approach.
    \begin{example}\label{exp:WT-init}
        In the reactor system depicted in \cref{fig:WaterTank}, liquid is continuously consumed by the reaction and is replenished through pipe $P$. The system alternates
        {\makeatletter
        \let\par\@@par
        \par\parshape0
        \everypar{}
        
        \begin{wrapfigure}{r}{0.34\textwidth}
        \centering
        \vspace{-0.6cm}
        \begin{adjustbox}{max width = 0.6\linewidth}
            \begin{tikzpicture}[font=\scriptsize,scale=1,
                pattern1/.style={draw=black,pattern color=black!60,pattern=north east lines}]
                \begin{scope}
                    %\draw[help lines] (0,0) grid (4,4);
                    %\draw[thick] {[thick,rounded corners] (0,3) |- (2.5,0)} -- (2.5,0.3) -- (2.6,0.3) -- (2.6,0.5) --  (2.5,0.5) -- (2.5,3); 
                    \draw[thick] {[thick,rounded corners] (0,3) |- (2.5,0)} -- (2.5,3); 
                    \draw[thick] (0,2) -- (2.5,2); % Water Tank
                    \fill[mygray,opacity=0.3]   {[thick,rounded corners] (0,2) |- (2.5,0)} -- (2.5,2) -- cycle;  % Water level 
                    \draw[thick] (2,3) |- (2.3,1.5) -- (2.3,3) -- (2,3);
                    \fill[pattern1, thick] (2,3) |- (2.3,1.5) -- (2.3,3) -- cycle; 
                    \node at(1.8,2.7) {$R$};
                    % Valve K2
                    %\draw[thick,fill=white]  (2.6,0.25) -- (2.8,0.55) -- (2.8,0.25) -- (2.6,0.55) -- cycle; 
                    %\node at (3,0.7) {$K_2$};
                    %\draw[thick] (2.9,0.3) -- (2.8,0.3) -- (2.8,0.5) -- (2.9,0.5);
                    % Valve K1
                    \draw[thick,fill=white] (0.2,3.1) -- (0.5,3.3) -- (0.2,3.3) -- (0.5,3.1) -- cycle;
                    \draw[thick] (0.25,3) -- (0.25,3.1) -- (0.45,3.1) -- (0.45,3);
                    \draw[thick] (0.25,3.3) --++ (0,0.2) arc (0:90:0.2) --++ (-0.05,0);
                    \draw[thick] (0.45,3.3) --++ (0,0.2) arc (0:90:0.4) --++ (-0.05,0);
                    \fill[mygray,opacity=0.3] (0.25,3.3) --++ (0,0.2) arc (0:90:0.2) --++ (-0.05,0) --++ (0,0.2) --++ (0.05,0) arc (90:0:0.4) --++ (0,-0.2) -- cycle; 
                    \node at (0.6,3.6) {$P$};
                \end{scope}
                \draw[|<->|] (0.3,0) -- node[right]{$h$}(0.3,2);
                \draw[|<->|] (-0.2,0) -- node[fill=white,inner ysep=2pt,inner xsep=0]{$5$}(-0.2,3);
latexthe key                 \draw[|<->|] (2.1,0) -- node[right]{$3$}(2.1,1.5);
                %\begin{scope}
                %    \node (controller) at(2.7,3.5) [draw,align=center,thick] {Control \\ Algorithm};
                %    \draw (controller.south) -- (2.7,0.4) node [sloped,midway,above] {Off\textbackslash On};
                %    \draw (controller.west) -| (0.8,3.2) -- (0.35,3.2);
                %    \draw (1.35,3.7) node {On\textbackslash Off}; 
                %\end{scope}
                \node at(1.2,-0.7) {\small{$\begin{array}{l}
                                        \bm{q_1}:\dot h = 1\\
                                        \bm{q_2}:\dot h = -1
                                            \end{array}$}};
            \end{tikzpicture}
        \end{adjustbox}
        \vspace{-0.3cm}
        \caption{Reactor System}
        \label{fig:WaterTank}
    \end{wrapfigure}
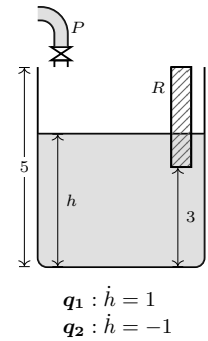
    \noindent
     between modes of adding liquid ($\bm{q_1}$) and exclusively consuming it ($\bm{q_2}$). The objectives are to keep the liquid level, $h$, between 0 and 4 meters, and to ensure that $h$ remains between 3 and 5 meters at a certain point during a critical reaction phase - time interval 3 to 4, for proper interaction with the reactor rod $R$. These objectives can be given as an STL formula $\varphi = (0\leq h\leq 4)\until[3,4](3\leq h\leq 5)$. 
    \qedT

    We present the core idea behind our approach in \cref{fig:overview}. Initially, we compute the state-time set $X_q^i$ iteratively. As shown in the upper block of \cref{fig:overview}, the set $X_q^i$ encompasses states in mode $q$ from which $\varphi$ can be satisfied within $i$ switches
    \par}%
    \noindent
      (as detailed in \cref{sec:prob1}). Once these \emph{state-time sets} are determined, the  switching controllers can be synthesized using the methods outlined in \cref{alg:syn-ss} and \cref{alg:time-dependent controller}.

    \end{example}  

\section{Notations and Problem Formulation}\label{sec:priliminary}
    \paragraph{Notations.} Let $\Nats,\Reals,$ and $\NonNegReals $ denote the set of natural, real, and non-negative real numbers, respectively. 
    %\haocomment{$\PosReals$, positive or nonnegative?}
    %\hscomment{nonnegative, correct it}
    %Let $\Realn$ denotes $n$-dimensional real space. 
    Given vector $x \in \Realn$, $x_i$ refers to its $i$-th component, and $p\,[x\repl u ]$ denotes the replacement of $x$ by $u$ for any predicate $p$ where $x$ serves as a variable. 
    %And $p\geq 0$ is an atomic predicate with $(p\geq 0)\,[x\repl u ]$ defined by $p\,[x\repl u] \geq 0$. For any set $S$, $S[x_i \repl a]$ denotes the projection of $S$ over $x_i = a$, that is $S[x_i \repl a] \defeq \{ (x_1,\dots,x_{i-1}, x_{i+1},\dots,x_n) \mid  (x_1,\dots,x_{i-1},a, x_{i+1},\dots,x_n) \in S\}$. 
    
    %Given any non-negative interval $I=[l,u]\subseteq\NonNegReals$ and $t\in\NonNegReals$, $I\Pminus t \defeq [l - t, u -t] \cap \NonNegReals$ denote the positive part of $I$ after minus $t$.
    %\hscomment{We do not use bold fonts to represent vectors here to avoid symbol confusion later}

    \paragraph{Differential dynamics.} We consider a class of dynamical systems featuring differential dynamics governed by ordinary differential equations (ODEs) of the form $\dot{\xx} = f(\xx)$, where $f$ is a continuous differentiable function. Given an initial state $x_0\in \Realn$, there exist a unique solution $\xx:\Reals_{\geq 0} \to \Realn$ in the sense that $\dot{\xx}(t) = f(\xx(t))$ for $t\geq 0$ and $\xx(0) = x_0$. 
    %Let $$\Seman{f} \defeq \{\xx \mid \xx \text{ is a solution of } f\}$$ denote the set of all trajectories that follow vector field $f$.Here we slightly abuse the notation of $\xx$, as it can both represent a state in $\Realn$ or a function from time to state.

   \paragraph{Switched systems.} A switched system is defined as a tuple $\Phi = (Q, F, \Init, \pi)$, where 
   \begin{itemize}
       \item $Q \defeq \{\,q_1, q_2,\dots, q_m\,\}$ is a finite set of discrete modes.
       \item $F\defeq \{\, f_q\mid q\in Q\,\}$ is a set of vector fields, and each mode $q\in Q$ endows with a unique vector field $f_q$ which specifies how system evolves in mode $q$.
       \item $\Init \subseteq \Realn$ is a set of initial states.
       \item $\pi\colon \Init \to (\Reals_{\geq 0} \to Q)$ is a switching controller. The controller maps each initial state $x_0\in \Init$ to a piecewise constant function $\pi(x_0)$, which in turn maps a time $t$ to the corresponding control mode $\pi(x_0)(t)$. 
   \end{itemize}
   
   % Switched systems are a class of hybrid systems that consists of a finite set of discrete modes $Q \defeq \{q_1, q_2,\dots, q_m\}$, a set of vector fields $F\defeq \{ f_q\mid q\in Q\}$, initial states $I\subseteq \Realn$, and a switching time controller $\pi\colon I \to (\Reals_{\geq 0} \to Q)$ that maps each initial state $x_0\in I$ to a piecewise constant function $\pi(x_0)$. 
   
   Given any initial state $x_0$, the dynamics of the switched system $\Phi$ is governed by equation $ \dot{\xx} (t) = f_{\pi(x_0)(t)}(\xx(t))$ with initial condition $\xx(0) = x_0$.

\paragraph{Signal temporal reach-avoid.} We consider a fragment of signal temporal logic, namely \emph{signal temporal reach-avoid} formula (ST-RA for short). The syntax of ST-RA is defined by
    \begin{align*}
        &\phi\Def~ \mu(x,t) \ge 0 ~|~ \neg\phi ~|~ \phi\wedge\phi\\
        &\varphi\Def ~ \phi_1 \,\mathcal{U}_I\, \phi_2
    \end{align*}
where $\phi$ is a Boolean combination of predicates over $x$ and time $t$, $I \defeq [l,u]$ is a closed time interval for some $0\leq l\leq u$. Intuitively, an ST-RA formula $\varphi$ expresses the requirement that the system should reach $\phi_2$ while avoid leaving $\phi_1$ within time frame $I$. The semantics of ST-RA formula, in alignment with STL, is defined as the satisfaction of a formula $\varphi$ with respect to a signal $\xx$ and a time instant $t$.
%\hscomment{Discuss how to support the whole STL}
\begin{remark}
Compared with the standard signal temporal logic (STL)~\cite{maler2004monitoring}, ST-RA formula does not allow nested ``until" operator, this makes ST-RA formula a fragment of STL. %\rev{However, ST-RA have already described a large variety of specifications in CPS, such as battery charge management and traffic light control.}
\end{remark}

Formally, given function $\xx\from \Reals_{\geq 0}\to \Realn$ (termed signal) and time $\tau$, the satisfaction of $\varphi$ at $(\xx, \tau)$, denoted by $(\xx,\tau) \models \varphi$, is inductively defined as follows:
\begin{align*}
            &(\xx,\tau)\models \mu(x,t)\ge 0 & & \text{iff} \quad \mu(\xx(\tau),\tau)\ge 0\,;\\
            &(\xx,\tau)\models \neg\phi & & \text{iff} \quad (\xx,\tau)\not\models \phi\,;\\  
            &(\xx,\tau)\models \phi_1\wedge\phi_2 & & \text{iff} \quad (\xx,\tau)\models \phi_1 \text{ and } (\xx,\tau)\models \phi_2\,;\\  
            &(\xx,\tau)\models \phi_1 \,\until[I]\, \phi_2 & & \text{iff} \quad \exists \tau'\ge \tau, \text{ such that } \tau'-\tau\in I,\, (\xx, \tau') \models \phi_2\, ,\\
            &&& \qquad \qquad \quad \;\; \text{and } \forall \tau''\in [\tau,\tau'], (\xx,\tau'')\models\phi_1 \,.
\end{align*}
Intuitively, the subscript $I$ in the until operator $\mathcal{U}_I$ defines the timing constraints under which a signal must reach $\phi_2$ while avoid leaving $\phi_1$. % We may omit, without conflicting with the standard until semantic in temporal logic, the subscript $I$ in $\phi_1 \,\mathcal{U}_I \,\phi_2$ when $I = [0,\infty)$.

Given a ST-RA formula $\phi$, we say a switched system $\Phi = (Q,F, \Init, \pi)$ models $\varphi$, denoted by $\Phi \models \varphi$, if $(\xx, 0) \models \varphi$ for any trajectory $\xx$ starting from initial set $\Init$. %Similarly, we say a switched hybrid automata $\HyAuto = (Q, F, \Init, \Dom, E, G)$ models $\varphi$, denoted by $\HyAuto \models \varphi$, if $(\xx, 0) \models \varphi$ for any trajectory $(q(t), \xx(t))$ starting from initial set $\Init$. 
We now formulate the problem addressed in this paper. 

\paragraph{\textbf{Problem Formulation.}}\label{para:prob_form}
Suppose there exists a finite set of control modes $Q = \{\, q_1, q_2, \dots, q_m\,\}$ and associated vector fields $F = \{\, f_{q_1}, f_{q_2}, \dots, f_{q_m}\,\}$. Each mode $q\in Q$ is associated with a vector field $f_q$ that governs the system's behavior in mode $q$. Let $\varphi = \phi_1 \,\mathcal{U}_I\,\phi_2$ be an ST-RA formula, a natural question is how to design a system that incorporates mode $q\in Q$ as subsystems, while ensuring any trajectory of the system satisfies $\varphi$. To address this, we formulate the problem as follows.

%As discussed in remark~\ref{remark:comp_switched_and_automata}, the switching mechanism varies in different types of hybrid systems. Thus, if time-dependent switching mechanism is preferred, our goal is to synthesize a switched system.
%\sfcomment{needs polish}
%\hscomment{Perhaps we could emphasize in problem 1 that our goal is to achieve a switched system with the minimum number of switches required to satisfy $\varphi$.}
%\sfcomment{This can be state in the next section as theoretic guarantee of our method.}
\begin{tcolorbox}[boxrule=.5pt,colback=white,colframe=black!75]
\textbf{Synthesis of Switched System.} Given a finite set of discrete modes $Q = \{\, q_1,q_2,\dots,q_m\,\}$, a set of vector fields $F = \{\, f_{q_1}, f_{q_2}, \dots, f_{q_m}\,\}$, and a ST-RA formula $\varphi = \phi_1 \,\mathcal{U}_I \,\phi_2 $, the switched system synthesis problem  aims to synthesize a switched system $\Phi = (Q, F, \Init , \pi)$, such that $\Phi \models \varphi$.
\end{tcolorbox}
%
%
%Similarly, if state-dependent switching mechanism is favored, our objective shifts to the synthesis of a switched hybrid automaton.
%
%\begin{tcolorbox}[boxrule=.5pt,colback=white,colframe=black!75]
%\textbf{Synthesis of Switched Hybrid Automata.} Given a finite set of discrete modes $Q = \{\, q_1,q_2,\dots,q_m\,\}$, a set of vector fields $F = \{\, f_{q_1}, f_{q_2}, \dots, f_{q_m}\,\}$, a set of edges $E\subseteq Q\times Q$, and a ST-RA formula $\varphi = \phi_1 \,\mathcal{U}_I \,\phi_2 $, the switched hybrid automata synthesis problem  aims to synthesize a switched hybrid automaton $\HyAuto =(Q, F, \Init, \Dom, E, G)$, such that $\HyAuto \models \varphi$.
%\end{tcolorbox}
%\hscomment{Add $E$ as a given condition}
%    \subsection{Problem Statement}
%        \begin{problem}\label{prob:origi-def}
%            Given a hybrid automaton $\HyAuto$ and an STL formula $\varphi$, redefine the time-dependent jump conditions of $\HyAuto$, such that $\forall\Traj[]\in\llangle\HyAuto\rrangle$, $\Traj[]\vDash\varphi$.   
%        \end{problem}
        %\begin{tcolorbox}[boxrule=1pt,colback=white,colframe=black!75]
        %    f
        %\end{tcolorbox}

\begin{remark}
The solutions to the above synthesis problem is inherently non-unique and may encompass trivialities, such as the one only with an empty initial set. Therefore, our goal is to identify a system with a nontrivial initial set $\Init$.  % In fact, a more larger initial set $\Init$ provides more probability to synthesize a nontrivial switching controller against   $\varphi$.
%\simoncomment{last sentence is bit confusing here}
\end{remark}

%\begin{remark}
%    \rev{In this paper, while our primary focus has been on synthesizing switched systems, the hierarchy we propose—particularly the method for calculating the state-time set—may be applicable to the synthesis of other types of controllers. This potential extension is discussed briefly in \cref{subsec:add-usage}.}
%    \nzcomment{We may delete this remark}
%    \sfcomment{I agree.}
%\end{remark}

%\begin{remark}
%The formulation of the problems does not explicitly impose a restriction on the state space. However, this apparent lack of restriction does not present a limitation, as any required restriction can be integrated into the specification $\varphi$. Specifically, if there is an additional state space constraint $S$, the task effectively becomes identifying a system that satisfies a modified requirement $\varphi' = \varphi \wedge (x \in S)$, which integrates the original specification $\varphi$ with the state space constraint, denoted by $(x \in S)$.
%\end{remark} 

%\sfcomment{Todo, summary of section 4, 5} In section XXX, we propose ... to switched system synthesize problem, base on result in section, we ,,, for the switched hybrid automata synthesis problem. 
\section{State-Time Set and its Calculation}\label{sec:state-time}

This section dedicates to synthesize a switched system $\Phi$ %(or equivalently, a time-dependent controller) 
that satisfies the given ST-RA formula  $\varphi = \phi_1\, \mathcal{U}_I\, \phi_2$. The key idea behind our approach is to compute a sequence of state-time sets $\{X_q^i\}_{q\in Q}$ for $i\in \mathbb{N}$, where $X_q^i$ denotes the set of all $(x,\tau)$ such that starting from $x$ at time $\tau$ in mode $q$, the system can be driven to reach $\phi_2$ while satisfying $\phi_1$ \emph{within $i$ times of switches}. In what follows, we first formally propose the concept of state-time sets and show how to calculate it explicitly. Subsequently, leveraging these state-time sets, we demonstrate the synthesis of a switched system that satisfies $\varphi$ in \cref{sec:prob1}.

\subsection{State-Time Sets}
The concept of state-time set is formally captured by the following definition.
\begin{definition} [State-time sets]\label{def:state-time_set}
For any $i\in\Nats$ and any $q\in Q$,  let $X_q^i$ denote the set of all state-time pairs $(x,\tau)$ such that there exists a controller $\pi(x) \from [\tau,\infty )\to Q$, satisfying 
  \begin{enumerate}[label=(\roman*)]
        \item $\pi(x)(\tau) = q$, and the piecewise constant function $\pi(x)$ contains at most $i$ discontinuous points; \label{cond1:switching_time}%\hscomment{It may not be appropriate to use $i+1$ different modes (see the example, only 2 modes exists, some state-time set may include 2 switches, thus needs to contain 3 different modes following this definition).} \sfcomment{Sure, change to at most $i$ discontinuous points}
        \item $(\xx, \tau) \models \phi_1\until_{I\Pminus \tau}\phi_2 $, where $\xx$ is the solution of ODE $\dot{\xx}(t) = f_{\pi(x)(t) } (\xx(t), t) $ over $[\tau,\infty)$ with $\xx(\tau) = x$,  and $I\Pminus \tau \defeq [l - \tau, u -\tau] \cap \NonNegReals$ for any interval $I = [l,u]$. \label{cond2:reach}
  \end{enumerate}
\end{definition}
Intuitively, condition~\ref{cond2:reach} suggests that the system can be driven to reach $\phi_2$ while satisfying $\phi_1$ from $x$ at time $\tau$, and condition~\ref{cond1:switching_time} indicates that the system initially remains in mode $q$, and the switching controller undergoes no more than $i$ switches. 
From the above definition of state-time sets, the following results can be derived:

\begin{restatable}{corollary}
{restateAllInitial}
\label{cor:all_initial} 
The following properties hold for the state-time sets $\{X_q^i\}_{q\in Q}$:
\begin{enumerate}[parsep = 1ex]
    \item For any $q\in Q$, $\{X_q^i\}$ is monotonically increasing, i.e. $X_q^0 \subseteq X_q^1 \subseteq X_q^2 \subseteq \cdots $.
    \item For any $i\in \Nats$ and  any $x \in  X_q^i [t \repl 0]$, $x$ can be driven to satisfy $\phi_1\until_{I}\phi_2$, i.e. there exists a switching controller $\pi$, such that 
$(\xx,0)\models \phi_1\until_{I}\phi_2 $, where $\xx$ is the trajectory starting from $x$ at time $0$ under controller $\pi$, and $X_i^q [t \repl 0] \defeq \{x \mid (x,0)\in X_i^q\}$ is the projection of $X_i^q$ into $t = 0$.
    \item  $\displaystyle \cup_{i\in \Nats} \cup_{q\in Q} X_q^i[t\repl 0]$ is the set of all states that can be driven to satisfy $\phi_1\until_{I}\phi_2$.
\end{enumerate}
\end{restatable}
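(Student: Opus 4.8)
The plan is to obtain all three items directly from \cref{def:state-time_set}, using only the monotonicity built into the ``at most $i$ switches'' condition and the standard convention that a switching controller is piecewise constant with finitely many discontinuities on every bounded interval. I would organize the argument as three short steps.

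First I would dispatch the monotonicity claim (item~1) purely by unfolding the definition: any controller witnessing $(x,\tau)\in X_q^i$ has at most $i$ switches, hence at most $i+1$, and therefore also witnesses $(x,\tau)\in X_q^{i+1}$; this yields the chain $X_q^0\subseteq X_q^1\subseteq\cdots$. Next, for item~2 I would fix $x\in X_q^i[t\repl 0]$, i.e.\ $(x,0)\in X_q^i$, extract the controller $\pi(x)$ from \cref{def:state-time_set} at $\tau=0$, observe that $I\Pminus 0 = [l,u]\cap\NonNegReals = I$ (since $0\le l\le u$), and read off that the induced trajectory satisfies $\phi_1\until_I\phi_2$ at time $0$; taking $\Init\defeq\{x\}$ with this controller exhibits $x$ as drivable to $\phi_1\until_I\phi_2$. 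The inclusion ``$\subseteq$'' of item~3 is then immediate from item~2.

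The substantive step is the reverse inclusion ``$\supseteq$'' of item~3. Here I would start from a switching controller $\pi$ whose trajectory $\xx$ from $x$ at time $0$ satisfies $\phi_1\until_I\phi_2$, with $I=[l,u]$. By the $\until_I$ semantics there is a reach time $\tau'\in[0,u]$ with $(\xx,\tau')\models\phi_2$ and $(\xx,\tau'')\models\phi_1$ for all $\tau''\in[0,\tau']$; since $\phi_1,\phi_2$ are Boolean combinations of predicates evaluated pointwise, the truth of $(\xx,0)\models\phi_1\until_I\phi_2$ depends only on $\xx|_{[0,u]}$. On the bounded interval $[0,u]$ the controller $\pi(x)$ has only finitely many switches, say $j$, so I would replace $\pi(x)$ by the controller $\pi'(x)$ that agrees with it on $[0,u]$ and is constant (equal to $\pi(x)(u)$) on $(u,\infty)$: this $\pi'(x)$ has at most $j$ switches, the same initial mode $q_0\defeq\pi(x)(0)$, and induces the same trajectory on $[0,u]$ (the governing ODE is unchanged there), hence still satisfies $\phi_1\until_{I\Pminus 0}\phi_2$. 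Thus $(x,0)\in X_{q_0}^j$, giving $x\in\bigcup_{i\in\Nats}\bigcup_{q\in Q}X_q^i[t\repl 0]$.

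I expect the only delicate point to be this last step, and specifically the two facts it rests on: that the horizon relevant to $\phi_1\until_I\phi_2$ is bounded by $\sup I$, and that a legitimate switching controller makes only finitely many switches on $[0,\sup I]$ (non-Zeno), so that the truncated-and-extended controller $\pi'$ is admissible with a finite switch count. Once these conventions are fixed the remainder is bookkeeping; without the non-Zeno assumption, $\bigcup_{i\in\Nats}\bigcup_{q\in Q}X_q^i[t\repl 0]$ would capture exactly the states drivable by non-Zeno controllers rather than literally all states.
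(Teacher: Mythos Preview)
Your proposal is correct and follows essentially the same approach as the paper: all three items are read off directly from \cref{def:state-time_set}. The paper's proof is extremely terse---item~(1) is declared ``trivial,'' item~(2) is a one-line unfolding of the definition at $\tau=0$, and item~(3) is stated to ``trivially hold from~(1)''---so your treatment of items~(1) and~(2) matches it exactly in substance.

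Where you go beyond the paper is in the $\supseteq$ direction of item~(3). The paper's one-line proof implicitly relies on the convention that a piecewise constant switching controller has only finitely many discontinuities in total (so the witness controller already has some switch count $j$ and initial mode $q_0$, placing $(x,0)\in X_{q_0}^j$ immediately). You instead work under the weaker non-Zeno convention (finitely many switches on each bounded interval) and supply the truncation-and-extension argument to produce a globally finite-switch controller on $[0,\sup I]$. This is a genuine refinement: your argument is robust to either reading of ``piecewise constant,'' whereas the paper's requires the stronger one. Your closing remark about what fails without non-Zeno is accurate and worth keeping.
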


%\begin{proof}
%    Trivial. \qed
%\end{proof}

According to \cref{cor:all_initial}, the state-time sets encompass the initial set of the switched system that we intend to synthesize. However, the state-time set and controller defined in \cref{def:state-time_set} are not given explicitly. To address this, we first elucidate the process of calculating the state-time sets. %Following this, we will detail the method for extracting a controller from these sets. 

The subsequent result establishes a relationship between the sets $\{X_q^i\}_{q\in Q}$ and $\{X_{q}^{i-1}\}_{q\in Q}$, forming the foundation for the inductive computation of state-time sets.%\reviewercomment{ Discuss the connection of our approach to the backward induction approach in model checking for temporal logic (LTL)} This connection is pivotal in understanding the progressive evolution of these sets and serves as a critical step in our analytical approach.

\begin{restatable}{theorem}
{restateInducStateTimeSet}
\label{thm:induc_state-time_set}
Follow the notations as before, we have\footnote{For any $a, b \in \PosReals$ such that $a \leq b$, the constraint $a \leq t \leq b$ is concisely denoted as $t \in [a, b]$.}
\begin{enumerate}
    \item Given any $q\in Q$, $(x,\tau)\in X_q^0$ if and only if 
    \begin{equation} \label{eq:induc_base}
        (\xx, \tau) \models \phi_1\, \mathcal{U}\, (\phi_2\wedge (t\in I))
    \end{equation}
    where $\xx$ is the solution of ODE $\dot{\xx}(t) = f_{q } (\xx(t), t) $ over $[\tau,\infty)$ with $\xx(\tau) = x$.
    \item Given any $q\in Q$, for any $i\geq 1$, $(x,\tau)\in X_{q}^i$ if and only if 
    \begin{equation} \label{eq:induc_i}
        \exists q'\neq q\in Q, \;(\xx,\tau) \models  \phi_1\, \mathcal{U}\, X_{q'}^{i-1}
    \end{equation}
    where $\xx$ is the solution of ODE $\dot{\xx}(t) = f_{q } (\xx(t), t) $ over $[\tau,\infty)$ with $\xx(\tau) = x$.
\end{enumerate}
\end{restatable}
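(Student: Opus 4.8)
The plan is to argue directly from \cref{def:state-time_set}, exploiting the piecewise-constant shape of the controller (and tacitly assuming $|Q|\ge 2$, the statement being vacuous otherwise): part~1 is the base case, where no switch is permitted and the claim reduces to an equivalence of two until-formulas, while part~2 is an inductive step obtained by cutting a controller at its first switch. For part~1, I would first note that a controller with at most $0$ discontinuous points is constant, hence identically $q$ on $[\tau,\infty)$, so the trajectory in condition~(ii) of \cref{def:state-time_set} is exactly the solution $\xx$ of $\dot\xx(t)=f_q(\xx(t),t)$ with $\xx(\tau)=x$; thus $(x,\tau)\in X_q^0$ iff $(\xx,\tau)\models\phi_1\,\mathcal{U}_{I\Pminus\tau}\,\phi_2$ for this $\xx$. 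It then remains to observe that, at a fixed $(\xx,\tau)$, the formula $\phi_1\,\mathcal{U}_{I\Pminus\tau}\,\phi_2$ and the formula $\phi_1\,\mathcal{U}\,(\phi_2\wedge(t\in I))$ of \eqref{eq:induc_base} have the same truth value: unfolding the until-semantics, both require a time $\tau'\geq\tau$ with $(\xx,\tau')\models\phi_2$ and $(\xx,\tau'')\models\phi_1$ for all $\tau''\in[\tau,\tau']$, the only discrepancy being the timing side-condition, $\tau'-\tau\in I\Pminus\tau$ versus $\tau'\in I$. Writing $I=[l,u]$ we have $I\Pminus\tau=[l-\tau,u-\tau]\cap\NonNegReals$, so, since $\tau'\geq\tau$ is imposed anyway, ``$\tau'-\tau\in I\Pminus\tau$'' collapses to ``$l\leq\tau'\leq u$'', i.e.\ exactly ``$\tau'\in I$'' (the degenerate case $u<\tau$, where both sides are unsatisfiable, included). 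This settles part~1.

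For the ``$\Rightarrow$'' direction of part~2, take $(x,\tau)\in X_q^i$ with a witnessing controller $\pi(x)$ (at most $i$ discontinuities, $\pi(x)(\tau)=q$, controlled trajectory $\xx$); fix a witness $\tau^\star\geq\tau$ for $(\xx,\tau)\models\phi_1\,\mathcal{U}_{I\Pminus\tau}\,\phi_2$ and let $\yy$ be the mode-$q$ solution from $(x,\tau)$. If $\pi(x)$ has no discontinuity strictly before $\tau^\star$ (in particular none at all), then $\xx$ agrees with $\yy$ on $[\tau,\tau^\star]$, whence $(x,\tau)\in X_q^0$ by part~1; I would then show this already forces the right-hand side of \eqref{eq:induc_i}: since $\tau^\star\in I$ we get $0\in I\Pminus\tau^\star$, and since the until-semantics quantifies $\phi_1$ over the \emph{closed} interval $[\tau,\tau^\star]$ we also have $(\yy,\tau^\star)\models\phi_1$, so for any $q'\neq q$ the constant-$q'$ controller launched at $\tau^\star$ places $(\yy(\tau^\star),\tau^\star)$ into $X_{q'}^{i-1}$ (reaching $\phi_2$ at relative time $0\in I\Pminus\tau^\star$), which gives $(\yy,\tau)\models\phi_1\,\mathcal{U}\,X_{q'}^{i-1}$. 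Otherwise, let $\tau_1\in[\tau,\tau^\star)$ be the first discontinuity of $\pi(x)$, switching to some $q'\neq q$; then $\xx$ agrees with $\yy$ on $[\tau,\tau_1]$, while $\xx$ restricted to $[\tau_1,\infty)$ is the trajectory of $\pi(x)$ restricted to $[\tau_1,\infty)$ (a controller that starts in $q'$ with at most $i-1$ discontinuities) from $(\yy(\tau_1),\tau_1)$; re-indexing the timing side-condition from $\tau$ to $\tau_1$ (valid since $\tau\leq\tau_1\leq\tau^\star$ and $\tau^\star\in I$ force $\tau^\star-\tau_1\in I\Pminus\tau_1$) shows $(\xx,\tau_1)\models\phi_1\,\mathcal{U}_{I\Pminus\tau_1}\,\phi_2$, i.e.\ $(\yy(\tau_1),\tau_1)\in X_{q'}^{i-1}$, and together with $(\yy,\tau'')\models\phi_1$ on $[\tau,\tau_1]$ this yields $(\yy,\tau)\models\phi_1\,\mathcal{U}\,X_{q'}^{i-1}$.

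For the ``$\Leftarrow$'' direction, I would start from $(\yy,\tau)\models\phi_1\,\mathcal{U}\,X_{q'}^{i-1}$ (with $\yy$ the mode-$q$ solution from $(x,\tau)$ and $q'\neq q$), witnessed by a time $\tau_1\geq\tau$ with $(\yy(\tau_1),\tau_1)\in X_{q'}^{i-1}$ and $(\yy,\tau'')\models\phi_1$ on $[\tau,\tau_1]$; pick a controller $\pi'$ attesting $(\yy(\tau_1),\tau_1)\in X_{q'}^{i-1}$ (starting in $q'$, at most $i-1$ discontinuities, trajectory $\zz$ from $(\yy(\tau_1),\tau_1)$ with $(\zz,\tau_1)\models\phi_1\,\mathcal{U}_{I\Pminus\tau_1}\,\phi_2$) and build $\pi(x)$ equal to $q$ on $[\tau,\tau_1)$ and to $\pi'$ on $[\tau_1,\infty)$. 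Then $\pi(x)(\tau)=q$, $\pi(x)$ has at most $i$ discontinuities, and by uniqueness of ODE solutions its trajectory $\xx$ equals $\yy$ on $[\tau,\tau_1]$ and $\zz$ afterwards; taking $\tau_2\geq\tau_1$ a witness for $(\zz,\tau_1)\models\phi_1\,\mathcal{U}_{I\Pminus\tau_1}\,\phi_2$, one checks $\tau_2-\tau\in I\Pminus\tau$ (again because $\tau\leq\tau_1\leq\tau_2$ and $\tau_2\in I$), $(\xx,\tau_2)\models\phi_2$, and $(\xx,s)\models\phi_1$ on $[\tau,\tau_2]$ (using $\yy$ on $[\tau,\tau_1]$ and $\zz$ on $[\tau_1,\tau_2]$), hence $(\xx,\tau)\models\phi_1\,\mathcal{U}_{I\Pminus\tau}\,\phi_2$ and $(x,\tau)\in X_q^i$.

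The hard part, I expect, is not any single step but the cumulative timing bookkeeping: each time the ``base time'' of the until is moved among $\tau$, a first-switch time, and a reach time, one must keep the interval-subtraction relations for $I\Pminus(\cdot)$ intact, the two workhorse facts being ``$s-a\in I\Pminus a\iff s\in I$ whenever $a\leq s$'' and ``$0\in I\Pminus s\iff s\in I$''. The only genuine edge case is an immediate switch $\tau_1=\tau$, which I would dispatch with the standard convention that altering a piecewise-constant controller on a measure-zero set leaves its trajectory unchanged, so prepending mode $q$ on $\{\tau\}$ costs at most one extra discontinuity and keeps the switch count within $i$.
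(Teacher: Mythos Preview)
Your proof is correct and follows the same strategy as the paper: part~1 via the semantic equivalence of $\phi_1\,\mathcal{U}_{I\Pminus\tau}\,\phi_2$ and $\phi_1\,\mathcal{U}\,(\phi_2\wedge t\in I)$ (which the paper isolates as a standalone lemma, \cref{lem:eq-init-induc}), and part~2 by cutting/gluing the controller at its first switch and re-indexing the timing constraint. You are in fact more careful than the paper on edge cases --- the paper's forward direction sets $\tau'=\inf\{t\mid\pi(t)\neq q\}$ without discussing what happens when $\pi$ never leaves $q$, whereas you handle that explicitly via the $X_q^0$ argument (using $(\yy(\tau^\star),\tau^\star)\in X_{q'}^{i-1}$ for any $q'\neq q$) and the $|Q|\ge 2$ assumption.
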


\oomit{\begin{proof}[Proof sketch]
    \qed
\end{proof}}

For any formula $\psi(u,v)$, let  $\qe\left(\exists u,\, \psi(u,v)\right) \defeq \{ v\mid \exists u,\, \text{s.t. } \psi(u, v)\text{ holds} \} $  denote the set of all $v$ for which $\exists u,\, \psi(u,v)$ is true. Utilizing this notation, the state-time sets can be represented inductively.
\begin{restatable}{theorem}
{restateComputeTimeStateSet}
\label{thm:compute_time-state_set}
    For any $q\in Q$, suppose the solution of ODE $ \dot{\xx}(t) = f_q (\xx(t)) $ with initial $x$ at time $\tau$ 
    %can be explicitly solved
    is denoted by $\Psi(\,\cdot\, ; x, \tau, q)$, then the state-time sets can be inductively represented by
    %\reviewercomment{ Detailed the type of solution, i.e., polynomial solution or exponential solution}\hscomment{I believe we can regard Theorem 2 as a general result for all solvable dynamic system, and we may further discuss the type of solution in the new subsection to be added rather than in here}
    \begin{align}
    &X_q^{0}= \qe \left(  \exists \delta \ge 0,~ \Big(\phi_2[(x,t)\repl(\Psi(t+ \delta;x, t, q),t+\delta)] \wedge (t + \delta \in I)\Big) \right. \label{eq:X_q^0} \\
        &\qquad \qquad \qquad \qquad \qquad \wedge \left.\Big( \forall 0\leq h\leq \delta,\, \phi_1 [(x,t)\repl(\Psi(t+ h;x, t, q),t+h)]\Big)  \right) \notag\\
       & X_q^{i}=\bigvee_{q'\neq q} \qe \left(  \exists \delta \ge 0,~\left( X_{q'}^{i-1}[(x,t)\repl(\Psi(t+ \delta;x, t, q),t+\delta)]\right)\right. \label{eq:X_q^i}\\
        &\qquad \qquad \qquad \qquad \qquad \wedge \left.\Big( \forall 0\leq h\leq \delta,\, \phi_1 [(x,t)\repl(\Psi(t+ h;x, t, q),t+h)]\Big)  \right) \notag
    \end{align}
    for any $q\in Q$ and any $i\in \Nats$.
\end{restatable}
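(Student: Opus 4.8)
The plan is to derive both identities by substituting \cref{thm:induc_state-time_set} into the semantics of the unbounded ``until'' operator and then reparametrising the quantifiers over time through the flow map $\Psi$. The ODE enters in exactly one way: if $\xx$ is the solution of $\dot{\xx}(t)=f_q(\xx(t))$ in mode $q$ with $\xx(\tau)=x$, then by definition $\xx(\tau+\delta)=\Psi(\tau+\delta;x,\tau,q)$ for every $\delta\ge 0$, and global existence and uniqueness of this solution on $[\tau,\infty)$ is inherited from the standing assumptions on the vector fields.

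For the base case, \cref{thm:induc_state-time_set}(1) states that $(x,\tau)\in X_q^0$ holds iff $(\xx,\tau)\models\phi_1\,\mathcal{U}\,(\phi_2\wedge(t\in I))$ along the mode-$q$ trajectory. Reading the unsubscripted $\mathcal{U}$ as the unbounded until (timing interval $[0,\infty)$) and unfolding the ST-RA semantics turns this into: there is $\tau'\ge\tau$ with $(\xx,\tau')\models\phi_2$ and $\tau'\in I$, and $(\xx,\tau'')\models\phi_1$ for all $\tau''\in[\tau,\tau']$. Now write $\tau'=\tau+\delta$ with $\delta\ge 0$ and $\tau''=\tau+h$ with $h\in[0,\delta]$ (a bijection onto the relevant domains), and replace each signal value by the flow. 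The atom $\phi_2$ evaluated at $(\xx(\tau+\delta),\tau+\delta)$ then becomes $\phi_2[(x,t)\repl(\Psi(t+\delta;x,t,q),t+\delta)]$ after renaming the free parameter $\tau$ to $t$, the atom $\phi_1$ at $(\xx(\tau+h),\tau+h)$ becomes $\phi_1[(x,t)\repl(\Psi(t+h;x,t,q),t+h)]$, and the constraint $\tau'\in I$ becomes $t+\delta\in I$. Collecting this condition on $(x,t)$ under $\exists\delta\ge 0$ is exactly the set $\qe(\cdots)$ of \eqref{eq:X_q^0}.

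The inductive step is structurally the same. By \cref{thm:induc_state-time_set}(2), for $i\ge 1$ we have $(x,\tau)\in X_q^i$ iff there exists $q'\neq q$ with $(\xx,\tau)\models\phi_1\,\mathcal{U}\,X_{q'}^{i-1}$ along the mode-$q$ trajectory, where $X_{q'}^{i-1}$ is read as an atomic predicate over state--time pairs, so that $(\xx,s)\models X_{q'}^{i-1}$ means $(\xx(s),s)\in X_{q'}^{i-1}$. Expanding the until exactly as above, but with the ``reach'' target being membership in $X_{q'}^{i-1}$ instead of $\phi_2\wedge(t\in I)$, yields for each fixed $q'$ precisely one disjunct of \eqref{eq:X_q^i}; pushing the existential over the finitely many modes $q'\neq q$ out as the disjunction $\bigvee_{q'\neq q}$ completes it. A short induction on $i$ then certifies that the whole family $\{X_q^i\}_{q\in Q,\,i\in\Nats}$ is well defined by these equations, starting from \eqref{eq:X_q^0}.

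I do not expect a genuine obstacle: the mathematical content already lies in \cref{thm:induc_state-time_set}, and this theorem is merely its explicit, flow-map reformulation. The only things demanding care are notational, namely being precise that the substitution operator $p[(x,t)\repl(\cdot,\cdot)]$ simultaneously rebinds the state variable and the time variable of the predicate $p$, checking that the reparametrisations $\tau'=\tau+\delta$ and $\tau''=\tau+h$ are legitimate changes of variable, and treating the previously computed set $X_{q'}^{i-1}$ as a first-class predicate that may legally occur on the right of an ``until''. If one worked instead with the time-dependent vector fields $f_q(\xx,t)$ of \cref{def:state-time_set}, nothing essential changes: $\Psi(\cdot\,;x,\tau,q)$ is still the unique solution through $(x,\tau)$, only the shorthand reading ``depends on elapsed time only'' is lost.
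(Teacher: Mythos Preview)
Your proposal is correct and follows essentially the same approach as the paper: both derive \eqref{eq:X_q^0} and \eqref{eq:X_q^i} by unfolding the semantics of the unbounded until from \cref{thm:induc_state-time_set}, reparametrising the time quantifiers as $\tau'=\tau+\delta$, $\tau''=\tau+h$, and substituting the flow map $\Psi$ for the signal values. Your write-up is in fact more careful than the paper's one-paragraph sketch, which simply calls the formulas a ``direct translation'' of \cref{eq:induc_base,eq:induc_i}.
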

%\ooit{
%\begin{proof}
%    Trivial. \qed
%\end{proof} }

\begin{remark}
    When $\psi(u,v)$ consists of a Boolean combination of polynomial inequalities, a decidable procedure, such as cylindrical algebraic decomposition~\cite{arnon1984cylindrical}, exists for computing $\qe\left(\exists u,\, \psi(u,v)\right)$. This procedure exhibits a complexity that is double exponential with respect to the number of variables involved.
\end{remark}

\begin{remark}
    Our methodology essentially shares the idea of backward induction in controller synthesis for timed games \cite{de2003element}. 
    However, our approach diverges in two key aspects: (1) the safety/target sets and timing constraints are intricately interwoven in ST-RA formula, necessitating their concurrent consideration at each step of the induction process; (2) our method operates within an infinite-dimensional space due to the continuous nature of the state space, in contrast to the backward induction for timed games, which is confined to a finite set of k-polyhedra.
\end{remark}

%\rev{
%Our methodology essentially shares the idea backward induction in controller synthesis for timed games \cite{de2003element}. However, applying this technique to hybrid systems against ST-RA presents significant challenges due to the diverse dynamics of hybrid systems, which substantially increase the complexity of backward induction at each step. For some certain types of ODE, it may be undecidable to explicitly calculate the state-time set. In the following subsection, we will address this challenge and explore extensions of \cref{thm:induc_state-time_set} to ODEs that may lack analytic solutions.
%}

\subsection{Computing/Approximating State-Time Sets}

Although \cref{thm:compute_time-state_set} offers an inductive representation of $X_q^i$, the explicit computation of \cref{eq:X_q^0,eq:X_q^i} are challenging in general. This difficulty arises from two main factors: 
\begin{enumerate*}[label=(\roman*)]
    \item the necessity to explicitly solve the ordinary differential equation in each mode, and

    \item the high complexity of $\qe$, and the potential inclusion of non-elementary functions (such as exponential functions) in \cref{eq:X_q^0,eq:X_q^i}, for which a generally decidable procedure to solve $\qe$ may not exist.
\end{enumerate*} 

To address the difficulties outlined above, we categorize the dynamics into constant and non-constant systems. For the constant dynamics, its solution can be directly computed, and there exists a decidable procedure to solve $\qe$ with a complexity polynomially dependent on the formula length. For the non-constant dynamics, due to their high complexity, we forego an explicit solution for the state-time set and instead demonstrate a method to approximate this set.

\paragraph{Constant dynamics.} Suppose the dynamics within each mode $q \in Q$ is constant, and both $\phi_1$ and $\phi_2$ are Boolean combinations of linear inequalities, 
\cref{eq:X_q^0,eq:X_q^i} can be effectively solved using readily available solvers, such as Z3~\cite{de2008z3}. \cref{thm:compute_time-state_set} directly implies the following result.

\begin{restatable}{corollary}
{restateConstantFlow}
\label{cor:constant_flow}
    Following the notations as before, suppose the dynamics within each mode is constant, i.e. $f_q = a_q$ for any $q\in Q$, and both $\phi_1$ and $\phi_2$ are Boolean combinations of linear inequalities, then $\{X_q^i\}_{q\in Q}$ can be inductively solved by \cref{eq:X_q^0,eq:X_q^i} with $ \Psi(t; x, \tau, q) = x + (t-\tau)\cdot a_q$.
\end{restatable}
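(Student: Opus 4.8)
The plan is to read the corollary off \cref{thm:compute_time-state_set} directly, in two moves: first, specialize the flow map $\Psi$ to its closed form under constant dynamics and substitute it into \cref{eq:X_q^0,eq:X_q^i}; second, verify by induction on $i$ that the formulas produced at every stage remain inside the linear fragment of real arithmetic, so that each $\qe$ call is effective and returns a Boolean combination of linear inequalities, thereby closing the recursion. For the first move, note that when $f_q \equiv a_q$ the ODE $\dot{\xx}(t) = f_q(\xx(t))$ with $\xx(\tau) = x$ has, by the existence–uniqueness statement recalled in the preliminaries, the unique solution $\Psi(t; x, \tau, q) = x + (t-\tau)\,a_q$; in particular $\Psi(t+\delta; x, t, q) = x + \delta\,a_q$ and $\Psi(t+h; x, t, q) = x + h\,a_q$. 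This is the only use of the constant-dynamics hypothesis that is specific to replacing $\Psi$ by an explicit term.

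For the base case, substituting this flow map turns the body of \cref{eq:X_q^0} into
\[
\exists \delta \ge 0,\; \Big( \phi_2[(x,t)\repl (x + \delta\,a_q,\, t+\delta)] \wedge (t+\delta \in I) \Big) \wedge \Big( \forall\, 0 \le h \le \delta,\; \phi_1[(x,t)\repl (x + h\,a_q,\, t+h)] \Big).
\]
Since $\phi_1$ and $\phi_2$ are Boolean combinations of linear inequalities in $(x,t)$, substituting the affine terms $x+\delta\,a_q$, $x+h\,a_q$, $t+\delta$, $t+h$ preserves this form (now in the variables $x,t,\delta$, resp. $x,t,h$); the conjunct $t+\delta\in I$ is linear; and the matrix under $\forall h$ together with the bound $0\le h\le\delta$ is again a linear formula. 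Hence the whole formula lies in linear real arithmetic, which admits effective quantifier elimination; eliminating $\delta$ and $h$ returns a quantifier-free Boolean combination of linear inequalities over $(x,t)$, which is exactly $X_q^0$ as delivered by \cref{eq:X_q^0} with $\Psi(t;x,\tau,q)=x+(t-\tau)\,a_q$.

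For the inductive step I would argue by induction on $i$, with the invariant that each $X_{q'}^{i-1}$ has already been obtained as a Boolean combination of linear inequalities in $(x,t)$ (the base case $i=0$ just established it). Substituting the constant flow map into \cref{eq:X_q^i}, each disjunct becomes
\[
\exists \delta \ge 0,\; X_{q'}^{i-1}[(x,t)\repl (x + \delta\,a_q,\, t+\delta)] \wedge \Big( \forall\, 0 \le h \le \delta,\; \phi_1[(x,t)\repl (x + h\,a_q,\, t+h)] \Big),
\]
which, by the induction hypothesis and the same reasoning as in the base case, is a formula of linear real arithmetic; applying $\qe$ to eliminate $\delta$ and $h$ yields a Boolean combination of linear inequalities, and the finite disjunction over $q'\neq q$ preserves this shape, giving $X_q^i$. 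The induction then goes through, so \cref{eq:X_q^0,eq:X_q^i} with $\Psi(t;x,\tau,q)=x+(t-\tau)\,a_q$ compute the state-time sets.

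The only point needing real care — and the closest thing to an obstacle — is the invariant that ``Boolean combination of linear inequalities'' is preserved by every $\qe$ step: this rests on quantifier elimination for linear arithmetic over the reals (e.g.\ Fourier–Motzkin elimination applied to the disjunctive normal form of the matrix) and on the observation that the bounded universal quantifier $\forall\, 0\le h\le\delta$ is just ordinary quantification against a linear constraint, introducing no new function symbols or nonlinear terms. Everything else is a routine substitution.
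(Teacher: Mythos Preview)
Your proposal is correct and follows essentially the same approach as the paper's own proof: identify the closed-form flow $\Psi(t;x,\tau,q)=x+(t-\tau)\,a_q$ and then argue that the resulting formulas in \cref{eq:X_q^0,eq:X_q^i} admit effective quantifier elimination. The paper's proof is terser---it simply notes that the substituted formulas lie in the first-order theory of real fields and invokes cylindrical algebraic decomposition---whereas you additionally carry the inductive invariant that each $X_q^i$ remains a Boolean combination of \emph{linear} inequalities, which is exactly the observation the paper records separately in the remark following the corollary to justify the polynomial-time complexity claim.
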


\begin{remark}
    Although $\qe$ on polynomial constraints is double-exponential in general~\cite{arnon1984cylindrical}, constant dynamics facilitate a relatively efficient (\emph{polynomial in formula length}) solving procedure. This comes from the following observation: (1) the $\qe$ procedure in \cref{eq:X_q^0,eq:X_q^i} operates in polynomial time when the constraints are linear and involve only a single existential and a single universal variable~\cite[Thm~6.2]{weispfenning1988complexity} . (2) if $X_{q}^{i-1}$ is linear for all $q\in Q$, then $X_{q}^i$ is also linear.
\end{remark}

We now illustrate the computation process of $X_q^i$ via the following example.

\begin{example}\label{ex:state-time_set_cal}
Let's reconsider the reactor system in \cref{exp:WT-init}. The reactor system consists of two modes $q_1$ and $q_2$ with $f_{q_1} = 1$, $f_{q_2} = -1$, and the liquid level requirement is $\varphi = (0\leq h\leq 4)\until[3,4](3\leq h\leq 5)$. 

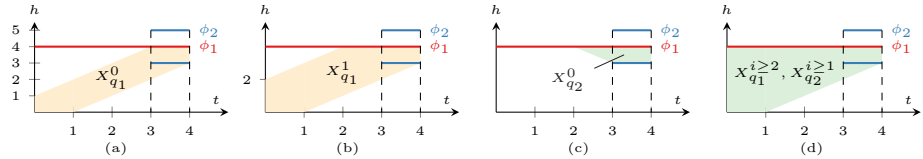
\begin{figure}[h]
\vspace{-0.2cm}
		    \centering
            \begin{adjustbox}{max width = 1\linewidth}
		        \begin{tikzpicture}[font=\tiny]
		            \begin{groupplot}[ Axis Set, width = 4.2cm, height = 2.8cm, group style={group size = 4 by 1, horizontal sep = 0.5cm}]
		                %%%X_{q_1}^1
                        \nextgroupplot[xmax=5, ymax=5.5,xlabel=$t$,ylabel=$h$,ytick={1,2,3,4,5}]
                        \addplot[name path = X1, draw = none] (2,2.1) node {$X_{q_1}^0$};
                        \addplot[name path = G1U, domain = {0:1}, draw = none] {x+1};
                        \addplot[name path = G1L, domain = {0:1}, draw = none] {0};
                        \addplot[fill=myorange!20]fill between[of=G1U and G1L, soft clip={domain=0:1}];
                        \addplot[name path = G2U, domain = {1:3}, draw = none] {x+1};
                        \addplot[name path = G2L, domain = {1:3}, draw = none] {x-1};
                        \addplot[fill=myorange!20]fill between[of=G2U and G2L, soft clip={domain=1:3}];
                        \addplot[name path = G3U, domain = {3:4}, draw = none] {4};
                        \addplot[name path = G3L, domain = {3:4}, draw = none] {x-1};
                        \addplot[fill=myorange!20]fill between[of=G3U and G3L, soft clip={domain=3:4}];
                        % The STL specification
		                \addplot[name path = Phi2Up,  domain = {3:4}, color=myblue, thick] {5} node [right] {$\phi_2$};
		                \addplot[name path = Phi2Low, domain = {3:4}, color=myblue, thick] {3};
                        \addplot[name path = Phi1, domain = {0:4}, color = myred, thick] {4} node [right] {$\phi_1$};
                        % Dash line
		                \draw[dashed] ($(axis cs:3,\pgfkeysvalueof{/pgfplots/ymax})!{1/11}!(axis cs:3,\pgfkeysvalueof{/pgfplots/ymin})$) -- (axis cs:3,\pgfkeysvalueof{/pgfplots/ymin});
		                \draw[dashed] ($(axis cs:4,\pgfkeysvalueof{/pgfplots/ymax})!{1/11}!(axis cs:4,\pgfkeysvalueof{/pgfplots/ymin})$) -- (axis cs:4,\pgfkeysvalueof{/pgfplots/ymin});

                        %%% X_{q_2}^0
                        \nextgroupplot[ymax=5.5, xmax=5, xlabel=$t$, ylabel=$h$, ytick={2}]
                        \addplot[name path = X1, draw = none] (2,2.4) node {$X_{q_1}^1$};
                        \addplot[name path = G1L, domain = {0:1}, draw = none] {0};
                        \addplot[name path = G1U, domain = {0:1}, draw = none] {x+2};
                        \addplot[fill=myorange!20]fill between[of=G1U and G1L, soft clip={domain=0:1}];
                        \addplot[name path = G2L, domain = {1:2}, draw = none] {x-1};
                        \addplot[name path = G2U, domain = {1:2}, draw = none] {x+2};
                        \addplot[fill=myorange!20]fill between[of=G2U and G2L, soft clip={domain=1:2}];
                        \addplot[name path = G2L, domain = {2:4}, draw = none] {x-1};
                        \addplot[name path = G2U, domain = {2:4}, draw = none] {4};
                        \addplot[fill=myorange!20]fill between[of=G2U and G2L, soft clip={domain=2:4}];
                        % The STL Specification
		                \addplot[name path=Phi2Up,domain={3:4},color=myblue,thick] {5} node [right] {$\phi_2$};
		                \addplot[name path=Phi2Low,domain={3:4},color=myblue,thick] {3};
                        \addplot[name path=Phi1, domain={0:4},color=myred,thick] {4} node [right] {$\phi_1$};
                        % Dashed lines
		                \draw[dashed] ($(axis cs:3,\pgfkeysvalueof{/pgfplots/ymax})!{1/11}!(axis cs:3,\pgfkeysvalueof{/pgfplots/ymin})$) -- (axis cs:3,\pgfkeysvalueof{/pgfplots/ymin});
		                \draw[dashed] ($(axis cs:4,\pgfkeysvalueof{/pgfplots/ymax})!{1/11}!(axis cs:4,\pgfkeysvalueof{/pgfplots/ymin})$) -- (axis cs:4,\pgfkeysvalueof{/pgfplots/ymin});

                        \nextgroupplot[ymax=5.5, xmax=5, xlabel=$t$, ylabel=$h$, ytick = \empty]
                        %\addplot[name path = X1, draw = none] (2,2.4) node {$X_{q_2}^0$};
                        \node[coordinate, pin={[pin edge={Black}]-170:{{\color{Black}$X_{q_2}^0$}}}] at (axis cs:3.3, 3.5){};
                        \addplot[name path = G1L, domain = {2:3}, draw = none] {-x+6};
                        \addplot[name path = G1U, domain = {2:3}, draw = none] {4};
                        \addplot[fill=mygreen!20]fill between[of=G1U and G1L, soft clip={domain=2:3}];
                        \addplot[name path = G2L, domain = {3:4}, draw = none] {3};
                        \addplot[name path = G2U, domain = {3:4}, draw = none] {4};
                        \addplot[fill=mygreen!20]fill between[of=G2U and G2L, soft clip={domain=3:4}];
                        % The STL Specification
		                \addplot[name path=Phi2Up,domain={3:4},color=myblue,thick] {5} node [right] {$\phi_2$};
		                \addplot[name path=Phi2Low,domain={3:4},color=myblue,thick] {3};
                        \addplot[name path=Phi1, domain={0:4},color=myred,thick] {4} node [right] {$\phi_1$};
                        % Dashed lines
		                \draw[dashed] ($(axis cs:3,\pgfkeysvalueof{/pgfplots/ymax})!{1/11}!(axis cs:3,\pgfkeysvalueof{/pgfplots/ymin})$) -- (axis cs:3,\pgfkeysvalueof{/pgfplots/ymin});
		                \draw[dashed] ($(axis cs:4,\pgfkeysvalueof{/pgfplots/ymax})!{1/11}!(axis cs:4,\pgfkeysvalueof{/pgfplots/ymin})$) -- (axis cs:4,\pgfkeysvalueof{/pgfplots/ymin});

                        \nextgroupplot[ymax=5.5, xmax=5, xlabel=$t$, ylabel=$h$, ytick = \empty]
                        \addplot[name path = X1, draw = none] (1.5,2.4) node {$X_{q_1}^{i\ge 2}, X_{q_2}^{i\ge 1}$};
                        \addplot[name path = G1L, domain = {0:1}, draw = none] {0};
                        \addplot[name path = G1U, domain = {0:1}, draw = none] {4};
                        \addplot[fill=mygreen!20]fill between[of=G1U and G1L, soft clip={domain=0:1}];
                        \addplot[name path = G2L, domain = {1:4}, draw = none] {x-1};
                        \addplot[name path = G2U, domain = {1:4}, draw = none] {4};
                        \addplot[fill=mygreen!20]fill between[of=G2U and G2L, soft clip={domain=1:4}];
                        % The STL Specification
		                \addplot[name path=Phi2Up,domain={3:4},color=myblue,thick] {5} node [right] {$\phi_2$};
		                \addplot[name path=Phi2Low,domain={3:4},color=myblue,thick] {3};
                        \addplot[name path=Phi1, domain={0:4},color=myred,thick] {4} node [right] {$\phi_1$};
                        % Dashed lines
		                \draw[dashed] ($(axis cs:3,\pgfkeysvalueof{/pgfplots/ymax})!{1/11}!(axis cs:3,\pgfkeysvalueof{/pgfplots/ymin})$) -- (axis cs:3,\pgfkeysvalueof{/pgfplots/ymin});
		                \draw[dashed] ($(axis cs:4,\pgfkeysvalueof{/pgfplots/ymax})!{1/11}!(axis cs:4,\pgfkeysvalueof{/pgfplots/ymin})$) -- (axis cs:4,\pgfkeysvalueof{/pgfplots/ymin});
                        
		            \end{groupplot}

                    \node at(1.1, -0.5) {(a)};
                    \node at(4.23, -0.5) {(b)};
                    \node at(7.35, -0.5) {(c)};
                    \node at(10.5, -0.5) {(d)};
		        \end{tikzpicture}
            \end{adjustbox}
            \vspace{-0.7cm}
            \caption{The state-time sets calculation of the reactor system in \cref{exp:WT-init}. The state-time sets reach a fixpoint after $2$ iteration.}
            \label{fig:state-time_set}
            \vspace{-.1cm}
            \end{figure}
%\noindent Based on \cref{eq:X_q^0,eq:X_q^i}, the state-time sets we calculate are illustrated in \cref{fig:state-time_set}. The procedure reaches a fixpoint within 2 iterations for any $q\in Q$.   
%have 
%\begin{align*}
%    & X_{q_1}^0 = \{ (h,t) \mid (t-1 \le h \le t+1) \wedge (0\le h \le 4)\wedge (0\le t\le 4)\},\\
%    & X_{q_2}^0 = \{ (h,t) \mid (6-t \le h ) \wedge (3 \le h\le 4) \wedge (0\le t \le 4)\}.
%\end{align*}
%for $i = 0$, and 
%\begin{align*}
%    & X_{q_1}^1 = \{ (h,t) \mid (t-1 \le h \le t+2) \wedge (0\le h\le 4) \wedge (0\le t\le 4)   \},\\
%    & X_{q_2}^1 = \{ (h,t) \mid (t-1 \le h \le 4) \wedge (0\le t\le 4) \}.
%\end{align*}
%for $i = 1$, and 
%\begin{align*}
%    & X_{q_1}^2 = \{ (h,t) \mid  (t-1 \le h\le 4) \wedge (0\le t \le 4)  \},\\
%    & X_{q_2}^2 = \{ (h,t) \mid  (t-1 \le h\le 4) \wedge (0\le t \le 4)\}.
%\end{align*}
%for $i=2$. Continuing the process, we can find $X_q^i = X_q^{i+1}$ for $i\geq 2$, indicating that $X_q^i$ %attains a fixpoint within 2 iterations for any $q\in Q$.  \qedT 
Based on \cref{eq:X_q^0,eq:X_q^i}, the state-time sets we calculate are illustrated in \cref{fig:state-time_set}. The procedure reaches a fixpoint within 2 iterations for any $q\in Q$.   \qedT

\end{example}

\paragraph{Non-constant dynamics.} Assuming that the dynamics are non-constant, the exact computation of  \cref{eq:X_q^0,eq:X_q^i} may prove to be overly complex or potentially undecidable. We thus seek to \emph{inner-approximate} the state-time sets. According to \cref{thm:induc_state-time_set}, 
\begin{itemize}
    \item[-] $X_q^0$ is the set from which the system in mode $q$ will satisfy $\phi_1\, \mathcal{U}\, (\phi_2\wedge (t\in I))$;
    \item[-] $X_q^i$ is the set from which the system in mode $q$ will satisfy $\phi_1\, \mathcal{U}\, X_{q'}^{i-1}$ for some $q'\in Q$.
\end{itemize}
We identify that the crucial element for inner-approximating the state-time sets lies in employing a method that finds sets from which the system will satisfy a classical `until' or `reach-avoid' formula\footnote{This problem is also referred to as the inner approximation of the reach-avoid problem}. Numerous studies have explored this issue; in this paper, we employ the approach proposed in~\cite{xue2023reach}.

\begin{theorem}[Inner-approximation of Reach-avoid Set~\cite{xue2023reach}]\label{thm:result_in_xue}
    Given dynamic system $\dot \xx(t) = f(\xx(t))$, safety set $\psi_1 \subseteq\Realn$ and target set $ \psi_2 \subseteq \Realn$. If there exists continuously differentiable function $v(x):\overline{\psi_1}\to\Reals$ and $w(x):\overline{\psi_1} \to\Reals$, satisfying \footnote{$\nabla_x v(x)$ represents the gradient of $v(x)$ with respect to $x$, $\overline{\psi_1}$ denotes the closure of set $\psi_1$ and $\partial \psi_1$ refers to the boundary of $\psi_1$.}
\begin{align*}
    \left\{\begin{aligned}
        & \nabla_{x} v(x) \cdot f(x) \ge 0, \quad \forall x\in \overline{\psi_1\setminus \psi_2},\\
        & v(x) - \nabla_{x} w(x)\cdot f(x) \le 0, \quad\forall x \in \overline{\psi_1\setminus \psi_2},\\
        & v(x) \le 0,\quad\forall x\in\partial \psi_1,
    \end{aligned}\right.
\end{align*}
then any trajectory starting from $\{x\mid v(x)\ge 0\}$ satisfies formula $\psi_1\, \mathcal{U}\,\psi_2$.
\end{theorem}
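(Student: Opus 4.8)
The plan is to fix an arbitrary initial point $x_0$ with $v(x_0)\ge 0$ — note that $v$ is only declared on $\overline{\psi_1}$, so $x_0\in\overline{\psi_1}$ — let $\xi\colon[0,\infty)\to\Realn$ be the (global) trajectory with $\xi(0)=x_0$, and establish the two clauses of $\psi_1\,\mathcal{U}\,\psi_2$ in turn: \emph{safety}, that $\xi$ stays in $\psi_1$ until it meets $\psi_2$, and \emph{reachability}, that $\xi$ meets $\psi_2$ after finite time. All the bookkeeping is done through the exit time $\tau\defeq\inf\{\,t\ge 0\mid \xi(t)\notin\overline{\psi_1\setminus\psi_2}\,\}$, so that on $[0,\tau)$ the trajectory lives in $\overline{\psi_1\setminus\psi_2}$ and the first two of the certificate inequalities are available along it.

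For the safety part, differentiating $v$ along $\xi$ and using the first inequality gives, for $t\in[0,\tau)$, $\frac{d}{dt}v(\xi(t))=\nabla_x v(\xi(t))\cdot f(\xi(t))\ge 0$, hence $v(\xi(t))\ge v(x_0)$. Taking $v(x_0)>0$ (the main case; see the last paragraph for $v(x_0)=0$), this makes $v$ strictly positive along $\xi$ on $[0,\tau)$, so by the third inequality $\xi(t)\notin\partial\psi_1$; since also $\xi(t)\in\overline{\psi_1\setminus\psi_2}\subseteq\overline{\psi_1}$, we get $\xi(t)\in\mathrm{int}\,\psi_1$ for all $t\in[0,\tau)$. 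Thus the trajectory cannot escape $\psi_1$ through its boundary while it is still outside $\psi_2$.

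For reachability, differentiating $w$ along $\xi$ and using the second inequality together with the previous step gives, for $t\in[0,\tau)$, $\frac{d}{dt}w(\xi(t))=\nabla_x w(\xi(t))\cdot f(\xi(t))\ge v(\xi(t))\ge v(x_0)$, whence $w(\xi(t))\ge w(x_0)+v(x_0)\,t$. If $\tau=\infty$ the right-hand side would diverge while $\xi(t)$ stays in $\overline{\psi_1\setminus\psi_2}$, contradicting boundedness of $w$ on that set — which holds as soon as $\psi_1$ is bounded, since then $\overline{\psi_1\setminus\psi_2}$ is compact and $w$ is continuous. Hence $\tau<\infty$. By continuity, $\xi(\tau)\in\overline{\psi_1\setminus\psi_2}$ with $v(\xi(\tau))\ge v(x_0)>0$, so $\xi(\tau)\in\mathrm{int}\,\psi_1$; and by minimality of $\tau$ there are times $t>\tau$ arbitrarily close to $\tau$ with $\xi(t)\notin\overline{\psi_1\setminus\psi_2}$, so in particular $\xi(t)\notin\psi_1\setminus\psi_2$. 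Since $\xi(t)\in\mathrm{int}\,\psi_1\subseteq\psi_1$ for $t$ near $\tau$, such a $t$ forces $\xi(t)\in\psi_1\cap\psi_2\subseteq\psi_2$; choosing one such $t'$ close enough to keep $\xi$ inside $\mathrm{int}\,\psi_1$ on all of $[\tau,t']$ yields $\xi([0,t'])\subseteq\psi_1$ and $\xi(t')\in\psi_2$, i.e. $\xi\models\psi_1\,\mathcal{U}\,\psi_2$.

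The main obstacle is precisely the finite-time conclusion. The first and third inequalities by themselves only give forward invariance of the region $\{v\ge 0\}$ relative to $\psi_1$, never arrival at $\psi_2$; it is the auxiliary function $w$ and the growth estimate $w(\xi(t))\ge w(x_0)+v(x_0)\,t$ it produces that pin down a finite arrival time, and this works only when $v(x_0)>0$ and $w$ is bounded on $\overline{\psi_1\setminus\psi_2}$ (the implicit boundedness hypothesis on $\psi_1$ inherited from~\cite{xue2023reach}). The degenerate stratum $\{v=0\}$ is not reached by the growth argument; it is either excluded — the substantive inner-approximation being the open set $\{v>0\}$ — or recovered by a continuous-dependence argument approximating $x_0$ by nearby points with $v>0$, which is the one place where extra care is genuinely needed.
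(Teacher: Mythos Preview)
The paper does not prove this theorem at all: it is imported verbatim from~\cite{xue2023reach} and used as a black box for inner-approximating reach-avoid sets, with only a remark that the search for $v,w$ reduces to an SDP. There is therefore no in-paper proof to compare against.

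Your argument is the standard Lyapunov-style one and is essentially correct for the strict sublevel set $\{v>0\}$ under the boundedness hypothesis on $\psi_1$ that you rightly flag as implicit. The two caveats you raise --- boundedness of $\psi_1$ so that $w$ is bounded on $\overline{\psi_1\setminus\psi_2}$, and the degenerate stratum $\{v=0\}$ where the growth estimate gives nothing --- are genuine and are exactly the places where the cited source carries additional standing assumptions; they are not gaps in your reasoning but in the theorem statement as reproduced here. One small point worth tightening: when you pass from $\xi(\tau)\in\mathrm{int}\,\psi_1$ to ``$\xi(t)\in\mathrm{int}\,\psi_1$ for $t$ near $\tau$'' you are using openness of $\mathrm{int}\,\psi_1$ and continuity of $\xi$, which is fine, but you should also note that if $x_0$ already lies in $\psi_2$ then $\tau$ may be $0$ and the until formula holds trivially --- your exit-time setup handles this, but it deserves a sentence.
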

\begin{remark}
    The synthesis of function $v(x)$ and  $w(x)$ can be reduce to a SDP problem. For a detailed formulation, we refer the reader to \cite{xue2023reach}.
\end{remark}

Since the state-time sets $X_q^{i}$ depend on both the state $x$ and time $t$, we first lift the dynamics of each mode to a higher dimension that incorporates time $t$. Specifically, the dynamics in mode $q$ are transformed into $\left(\dot{x}, \dot{t}\right) = (f_q,1)$. Subsequently, employing \cref{thm:result_in_xue} and \cref{thm:induc_state-time_set}, we can inductively inner-approximate the state-time set $X_q^i$. The resulting approximation is denoted by $\widetilde{X_q^i}$.

\begin{example}\label{exp:appro}
    
Consider a temperature control system featuring two modes, $q_1$ and $q_2$, with dynamics given by $f_{q_1} = 20 - 0.2x - 0.001x^2$ and $f_{q_2} = -0.2x - 0.001x^2$, 
{\makeatletter
	\let\par\@@par
	\par\parshape0
	\everypar{}
\begin{wrapfigure}{r}{0.48\textwidth}
        \vspace{-0.8cm}
       \centering
       \begin{adjustbox}{max width = 1.1\linewidth}
           \scalebox{1.4}{
               \begin{tikzpicture}
                \node at (0,0) {\includegraphics[width=1\textwidth]{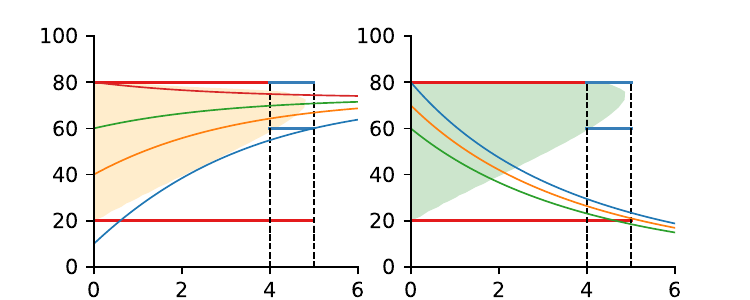}};
                \draw[thick, -{Stealth[length=3mm, width=2mm]}] (-4.573,1.7) -- (-4.573,2.2); % stealth风格的箭头
                \draw[thick, -{Stealth[length=3mm, width=2mm]}] (0.583,1.7) -- (0.583,2.2); % stealth风格的箭头
                \draw[thick, -{Stealth[length=3mm, width=2mm]}] (-0.5,-1.9) -- (0,-1.9); % stealth风格的箭头
                \draw[thick, -{Stealth[length=3mm, width=2mm]}] (4.7,-1.9) -- (5.2,-1.9); % stealth风格的箭头
                \node at(-0.2,-1.6) {{$t$}};
                \node at(5,-1.6) {$t$};
                \node at(-4.3,2) {$x$};
                \node at(0.9,2) {$x$};

                \node at(-3,0.5) {\large{$\widetilde{X_{q_1}^0}$}};
                \node at(2.5,0.5) {\large{$\widetilde{X_{q_2}^1}$}};

                \node at(-2.8,-2.8) {\large{(a)}};
                \node at(2.5,-2.8) {\large{(b)}};
               \end{tikzpicture}
          }
       \end{adjustbox}
       \vspace{-0.7cm}
    \caption{The state-time sets approximation of temperature controller system}
      \label{fig:approx}
  \end{wrapfigure}
    \noindent
    where $x$ represents the temperature. The control objective is defined by the ST-RA formula $\varphi = (20\le x\le 80) \until_{[4,5]} (60\le x\le80)$.
 
    \cref{fig:approx} presents the result obtained by inner-approximating\footnote{The approximation of $X_{q_2}^0$ and $X_{q_1}^1$ is an empty set, hence it is not depicted.} $X_{q_1}^0$, $X_{q_2}^0$, $X_{q_1}^1$, and $X_{q_2}^0$. Based on the results, we observe that when $x$ is within the range of $[20, 80]$ in mode $q_1$, the system can satisfy $\varphi$ without any switching. However, 
    \par}
    \noindent
    for $x\in [20, 80]$ in mode $q_2$, at least one switch is necessary for $\varphi$ to be satisfied.\qedT
\end{example}

\section{Synthesizing Switched Systems}\label{sec:prob1}
In this section, we demonstrate the synthesis of a switched system $\Phi$ that conforms to the formula $\varphi = \phi_1, \mathcal{U}_I, \phi_2$. This synthesis builds on the state-time sets introduced in Section \ref{sec:state-time}. We initially outline the synthesis procedure for the switched system in \cref{alg:syn-ss} and subsequently describe the extraction of a switching controller in \cref{alg:time-dependent controller}.

\paragraph{Switched System Synthesis.}
We now summary the synthesis algorithm in \cref{alg:syn-ss}. Given any $k\in \Nats$ that serves as a prescribed upper bound of switching time, \cref{alg:syn-ss} inductively calculates/inner-approximates\footnote{To clarify, we continue to use $X_q^i$ to represent the inner approximation of the state-time sets, rather than using $\widetilde{X_q^i}$.} state-time sets $\{X_q^i\}_{q\in Q}$ (line \ref{alg1:X_q^0}, \ref{alg1:X_q^i}), and partition $X_q^k[t\repl 0]$ into $\Init(q)^i\defeq (X_q^i \setminus X_q^{i-1})[t\repl 0]$ (line \ref{alg1:init0}, \ref{alg1:initi}) for $i = 0,1, \dots, k$. $\Init(q)^i$ denote the set of states (in mode $q$) that can be driven to satisfy $\varphi$ with \emph{at least $i$ times} of switching (cf. \cref{cor:all_initial}).
%synthesizes a switched hybrid system satisfying requirement $\varphi$.
%Based on \cref{alg:time-relevant-syn}, \cref{alg:syn-ss} synthesizes a switched hybrid system satisfying requirement $\varphi$. 
The initial set is defined by 
$$\Init = \cup_{q\in Q}\cup_{i=0}^k \Init[q]^i,$$ 
which contains states that can be driven to satisfy $\varphi$ within $k$ times of switching, and the switching controller $\pi$ is synthesized by \cref{alg:time-dependent controller} (line \ref{alg1:controller}). 

    \begin{algorithm}[t]
    %\vspace{-1cm}
        \caption{Synthesis of Switched system}
        \label{alg:syn-ss}
        \begin{algorithmic}[1] % The number tells where the line numbering should start
            \Require  $Q$, $F$, $\varphi=\phi_1\until[I]\phi_2$, and $k$ 
            \Comment{$k$ is the upper bound of switching time}
            \Ensure A switched system $\Phi = (Q,F,\Init,\pi)$, such that $\Phi\models \varphi$
                %\State Call \cref{alg:time-relevant-syn} to obtain  $\{X_q^i\}_{i=0}^k$ and $\{\Init[q]^i\}_{i=0}^k$ for $q\in Q$
                                \ForAll{$q\in\DisState$} \label{alg1:init-start}
                    \State $X_q^0\gets $ inner-approximate/explicitly calculate $X_q^0$ \label{alg1:X_q^0}
                    \State $\Init[q]^0\gets X_q^0[t\repl 0]$ \label{alg1:init0}
                \EndFor \label{alg1:init-end}
                \For{$i = 1,2,\cdots,k$} \label{alg1:induct-start}
                    \ForAll{$q\in Q$}
                        \State $X_q^i\gets$ inner-approximate/explicitly calculate $X_q^i$ \label{alg1:X_q^i}
                        \State $\Init[q]^i\gets (X_q^i \setminus X_q^{i-1})[t\repl 0]$ \Comment{$\Init[q]^i$ is recorded for controller synthesis}\label{alg1:initi}
                    \EndFor
                    %\State $i\gets i+1$
                \EndFor
                \State $\Init \gets \cup_{q\in Q}\cup_{i=0}^k \Init[q]^i$       \Comment{Initial set}
    
                \State Call \cref{alg:time-dependent controller} to obtain controller $\pi$ \label{alg1:controller}
                \Comment{
                \begin{minipage}[t]{5.7cm}
                Given any $x_0\in \Init$, \cref{alg:time-dependent controller} 
                computes the controller that drives $x_0$ to satisfy $\varphi$
                \end{minipage}
                }
            %\EndProcedure
        \end{algorithmic}
    \end{algorithm}

\paragraph{Switching controller synthesis.} 
For any $x_0\in \Init$, \cref{alg:time-dependent controller} computes the controller that drives $x_0$ to satisfy $\varphi$. 
\cref{alg:time-dependent controller} first finds $\Init[q_0]^l$ that contains $x_0$ with $l$ be the smallest index (line \ref{alg:init_mode}). $l$ is the smallest switching time that can drive $x_0$ to satisfy $\varphi$, and the subscript $q_0$ indicates $x_0$ first lies in mode $q_0$. 

Line \ref{alg:start}--\ref{alg:end} find the next switching time and switching mode. 
Let $\Reach(t; x_0, t_0, q)$ denote the over-approximation of the reachable set starting from $(x_0, t_0)$ in mode $q$ at time $t$. 
%The existence of $q_j$ in line \ref{alg:next_switching_time} is ensured by the construction of the state-time sets, snd 
Next switching time $\widetilde{t}$ and switching mode $\widetilde{q}$ are chosen to ensure that the system enters $X_{\widetilde{q}}^{l-j}$ at time $\widetilde{t}$ in mode $q_{j-1}$, this is formally encoded by 
\[\Reach( \widetilde{t} ;x_{j-1},t_{j-1},q_{i-1}) \subseteq X_{\widetilde{q}}^{l-j}[t = \widetilde{t}\,].\]

In line~\ref{alg:output_controller}, the controller $\pi$ maps $x_0$ to a piecewise constant function $\pi(x_0) = (q_0,t_0)(q_1,t_1)\cdots(q_l,t_l) $, which represents a function that maps $t$ to $q_i$ if $t_{i} \leq t < t_{i+1}$.

% In line~\ref{alg:output_controller}, the controller $\pi$ assigns the initial state $x_0$ to a piecewise constant function. Specifically, $\pi(x_0) = (q_0, t_0)(q_1, t_1)\cdots(q_l, t_l)$, where each $q_i$ is mapped to the time interval $t_i \leq t < t_{i+1}$.

\begin{algorithm}[t]
    \caption{Switching controller synthesis}
    \label{alg:time-dependent controller}
    \begin{algorithmic}[1] % The number tells where the line numbering should start
        \Require  $x_0$, $\{X_q^i\}_{i=0}^k$, and $\{\Init[q]^i\}_{i=0}^k$ \Comment{$x_0$ is the initial state}
        \Ensure  $\pi(x_0)$ \Comment{The switching controller}
        %\Procedure{$SynSwit$}{$x_0, \{\Init[q]^i\}_{i=0}^k,\{G_t(e)^i\}_{i=1}^k$ } \Comment{$x_0$ is the initial state}
            %\State $\{\Init[q]^i\}_{i=0}^k, \{G_t(e)^i\}_{i=1}^k \gets StateTimeCal(Q,F,E,\varphi,k)$
            %\State $I \gets \cup_{q\in Q}\cup_{i\le k} \Init[q]^i$
            \State Find the initial set $\Init[q_0]^l$ that includes $x_0$ and has the smallest index $l$ \label{alg:init_mode}
            \State Select $q_0$ as initial mode, $t_0 \gets 0$                    
            \For{$j=1,\cdots,l$}
                %\State Select $t_{j}\in G_t(q_{j-1},q_{j})^{i-j+1}[x \repl \xx_{q_{j-1}}(t;x_{j-1},t_{j-1})]$
                \For{$q\in Q$} \label{alg:start}
                    \If{$\Reach( \widetilde{t} ;x_{j-1},t_{j-1},q_{i-1}) \subseteq X_{\widetilde{q}}^{l-j}[t = \widetilde{t}\,]$ for some $\widetilde{t} > t_{j-1}$,\,$\widetilde{q}\in Q$}
                        \State Select $t_j\gets \widetilde{t}$,~ $q_j\gets \widetilde{q}$ \label{alg:next_switching_time}
                        \State $x_j\gets \Reach(t_j;x_{j-1},t_{j-1},q_{j-1})$\label{alg:next_switching_state}
                        \State \textbf{Break}
                    \EndIf
                \EndFor \label{alg:end}
                % \State Select a $q_j\in Q$ such that $X_{q_j}^{l-j}[x\repl \Psi(t;x_{j-1}, t_{j-1}, q_{j-1})]$ is nonempty \label{alg:next_switching_time}
                % \State Select $t_{j} \in X_{q_j}^{l-j}[x\repl \Psi(t;x_{j-1}, t_{j-1}, q_{j-1})]$ 
                %\State Select $t_{j} \in X_{q_j}^{l-j}[x\repl p(t;x_{j-1}, t_{j-1}, q_{j-1})\!+\!I]$ 
                % \State $x_{j} \gets \Psi(t_j;x_{j-1},t_{j-1},q_{j-1})$ \label{alg:next_switching_state}
                %\State $x_{j} \gets p(t_j;x_{j-1},t_{j-1},q_{j-1})\! +\! I$ \label{alg:next_switching_state}
            \EndFor
            \State $\pi(x_0) = (q_0,t_0)(q_1,t_1)\cdots(q_l,t_l) $ \Comment{
            \begin{minipage}[t]{6cm}
                Representing a piecewise constant function such that $\pi(x_0)(t) = q_i$ if $t_{i} \leq t < t_{i+1}$
            \end{minipage}
            } \label{alg:output_controller}
        %\EndProcedure
    \end{algorithmic}
\end{algorithm}
\begin{remark}
    Numerous methods are available to estimate the reachable set of a dynamic system \cite{chen2012taylor,xue2016under,xue2019inner}. In this paper, we employ Flow* \cite{chen2013flow}, a method based on Taylor model, to over-approximate the reachable set. 
\end{remark}

\begin{remark}
    Assuming that the dynamics (i.e. $f_q$) within each mode remain constant, the reachable set can be explicitly calculated. This, in conjunction with the explicit calculation of state-time sets, is crucial for demonstrating relative completeness in the context of constant dynamics (c.f. \cref{thm:theoretic}). 
\end{remark}

\begin{remark}
    For non-constant dynamics, since the state-time sets and reachable sets are inner- and over-approximated, there may exist an initial state $x_0$ that can be driven to satisfy the ST-RA formula, while our method fails to identify a controller.
\end{remark}
%\paragraph{Approximating Reachable Sets.}
%    Over-approximating the state of a system at time $t$ with a given initial condition frees us from analyzing the entire state space when synthesizing a switching controller. We only need to focus on the state within the reachable set, thereby reducing the conservation of the synthesized controller. Specially, if the reachable set originating from $(x_0,t_0)\in X_q^i$ is enclosed by a state-time set $X_{q'}^{i-1}$ at time $t$, the HS can switch from $q$ to $q'$ at time $t$ while ensuring compliance with the ST-RA formula.
%    \sfcomment{hard to understand}

%    Numerous methods are available to estimate the reachable set of a dynamic system \cite{chen2012taylor,xue2016under,xue2019inner}. In this paper, we employ Flow* \cite{chen2013flow}, a method based on Taylor model, to over-approximate the reachable set. We define $\Reach(t; x_0, t_0, q)$ as the 

We now illustrate our approach through two examples.

\begin{example}\label{exp:alg1-resul}
    In \cref{ex:state-time_set_cal}, we have obtained the state-time sets $\{X_{q_1}^i,X_{q_2}^i\}$ for $i \leq 2$, thus, according to \cref{alg:syn-ss} (with $k=2$), we have
    \begin{align*}
                &\Init[q_1]^0 = [0,1], & &\Init[q_1]^1 = (1,2], && \Init[q_1]^2 = (2,4]\\
                &\Init[q_2]^0 = \emptyset, && \Init[q_2]^1 = [0,4], && \Init[q_2]^2 = \emptyset.
            \end{align*} 
    Based on this, we can synthesize a switched system $\Phi$ with $\Init = \{h \mid 0\leq h\leq 4\}$. The corresponding switching controller $\pi$ is defined by %\sfcomment{check whether the controller is consistent with the overview}
                \[
                \pi(x_0) = 
                \begin{cases}
                    (q_1,0), & \text{if $0\leq x_0 \leq 1$}\\
                    ( q_2,0) (q_1, \frac{x_0-1}{2}), &  \text{if $1< x_0 \leq 4$}.
                \end{cases}
            \tag*{\qedT}\] 
            
\end{example}

 {\makeatletter
	\let\par\@@par
	\par\parshape0
	\everypar{}
       
    \noindent 
%    reachable set from $(x_0, t_0)$ at time $t$, where the dynamics are described by $f_q$. The following example offers a clear demonstration of using reachable sets to synthe-
\begin{wrapfigure}{r}{0.34\textwidth}
    \vspace{-0.2cm}        
    \centering
            \begin{adjustbox}{max width = 0.92\linewidth}
                \scalebox{1}{
                    \begin{tikzpicture}
                        \node at(0,0) {\includegraphics[width=1\textwidth]{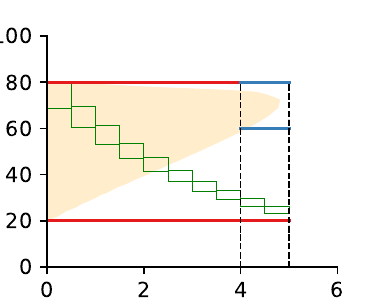}};
                        \draw[very thick, -{Stealth[length=5mm, width=4mm]}] (4.9,-3.8) -- (5.4,-3.8); % stealth风格的箭
                        \draw[very thick, -{Stealth[length=5mm, width=4mm]}] (-4.57,3.6) -- (-4.57,4.1); % stealth风格的箭
                        \node at(-6.1,3.7) {\huge{\textsf{1}}};
                        \node at(5.2,-3) {\huge{$t$}};
                        \node at(-4.1,3.8) {\huge{$x$}};
                    \end{tikzpicture}
               }
            \end{adjustbox}
            \vspace{-0.1cm}
            \caption{Switching controller synthesis of \cref{exp:appro}}
           \label{fig:appro-syn}
        \end{wrapfigure}

        \begin{example}\label{exp:appro-syn}
        Let's reconsider \cref{exp:appro}, we demonstrate our approach by synthesizing the switching controller for initial state $x_0=80$ in mode $q_2$. The reachable set $\Reach(t;x_0,t_0,q_2)$ is represented by green boxes in \cref{fig:appro-syn}. We observe the reachable set will enter $X_{q_1}^0$ for any $t\in [0,2]$, this implies initial state $x_0=80$ in mode $q_2$ can be driven to satisfy $\varphi$ if the system switches into mode $q_1$ within time interval $[0,2]$, i.e. $\pi(80) = (q_2,0) (q_1,\widetilde{t})$ for any $\widetilde{t} \in [0,2]$.
        \qedT
        \end{example}
    \par}%
    \noindent
    
    The following result states the advantages of our approach.
%, namely: soundness, relative completeness, and minimal switching property.
    \begin{restatable}[Soundness, Relative Completeness, Minimal Switching Property]{theorem}
    {restateTheoretic}\label{thm:theoretic}
        Given modes $Q$, vector fields $F$, and formula $\varphi=\phi_1\,\mathcal{U}_I\, \phi_2$, the following results hold:
        \begin{enumerate}
            \item \cref{alg:syn-ss} is sound, that is $\Phi\models \varphi$;
            \item \cref{alg:syn-ss} is relatively complete for constant dynamics:
            \reviewercomment{The name "relative completeness" may evoke an unrelated theorem for dynamic system}
           for any $x\in \Realn$, if $x$ can be driven to satisfy $\varphi$ with some controller $\pi$, then there exists $k \in \Nats$\footnote{In fact, $k$ can be chosen to the number of discontinuous points of $\pi(x)$. } , such that the initial set of the synthesized switched system contains $x$.
            \item The controller synthesized in \cref{alg:time-dependent controller} features minimal switching property for constant dynamics: for any $x_0\in \Init$, there does not exists any controller $\pi'$, that can drive $x_0$ to satisfy $\varphi$ with switching time (equivalently, number of discontinuous points of $\pi'(x_0)$) less than $\pi(x_0)$.
        \end{enumerate}
    \end{restatable}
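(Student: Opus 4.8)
The plan is to prove the three claims separately, each resting on the inductive characterization of the state-time sets in \cref{thm:induc_state-time_set}, the monotonicity and initial-set characterization in \cref{cor:all_initial}, and — for the completeness and optimality parts — on the fact that for constant dynamics both the state-time sets (\cref{cor:constant_flow}) and the reachable sets $\Reach$ are computed \emph{exactly} rather than approximated. Throughout I use that $\cup_{i=0}^{k}\Init[q]^i = X_q^{k}[t\repl 0]$ by telescoping (valid by the monotonicity in \cref{cor:all_initial}), so the synthesized initial set is $\Init = \cup_{q\in Q} X_q^{k}[t\repl 0]$.

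\textbf{Soundness.} I would first show that on input $x_0\in\Init$, \cref{alg:time-dependent controller} terminates with a controller $\pi(x_0)=(q_0,t_0)\cdots(q_l,t_l)$ whose induced trajectory $\xx$ satisfies $\varphi$, by establishing the loop invariant: after iteration $j$, $(x_j,t_j)\in X_{q_j}^{l-j}$ and $\phi_1$ holds along the trajectory over $[t_{j-1},t_j]$. It holds initially because $x_0\in\Init[q_0]^l$ forces $(x_0,0)\in X_{q_0}^l$; it is preserved because, by \cref{thm:induc_state-time_set}(2), $(x_{j-1},t_{j-1})\in X_{q_{j-1}}^{l-j+1}$ is equivalent to the existence of a mode $q'$ with the mode-$q_{j-1}$ flow satisfying $\phi_1\,\mathcal{U}\,X_{q'}^{l-j}$, so a pair $(\widetilde t,\widetilde q)$ that reaches $X_{\widetilde q}^{l-j}$ while keeping $\phi_1$ up to $\widetilde t$ — the selection criterion of line~\ref{alg:next_switching_time} — exists, and any such choice keeps the invariant (the true state lies inside $\Reach$, which is required to lie inside the inner-approximated $X_{\widetilde q}^{l-j}$, so the over-approximation only strengthens the conclusion). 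At $j=l$ we reach $(x_l,t_l)\in X_{q_l}^0$, so by \cref{thm:induc_state-time_set}(1) the mode-$q_l$ flow satisfies $\phi_1\,\mathcal{U}\,(\phi_2\wedge(t\in I))$: $\phi_2$ holds at some absolute time $t^{\ast}\in I$ with $\phi_1$ over $[t_l,t^{\ast}]$. Concatenating the segments, $\phi_1$ holds over $[0,t^{\ast}]$ and $\phi_2$ at $t^{\ast}\in I$; since $I\Pminus 0 = I$, this is precisely $(\xx,0)\models\phi_1\,\mathcal{U}_I\,\phi_2$. As $x_0\in\Init$ was arbitrary, $\Phi\models\varphi$.

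\textbf{Relative completeness for constant dynamics.} Suppose $x$ is driven to satisfy $\varphi$ by some controller $\pi$ whose restriction $\pi(x)$ has $m$ discontinuities and initial mode $q_0=\pi(x)(0)$; set $k=m$. Since $I\Pminus 0 = I$, \cref{def:state-time_set} gives $(x,0)\in X_{q_0}^{m}=X_{q_0}^{k}$, hence $x\in X_{q_0}^{k}[t\repl 0]\subseteq\Init$. That \cref{alg:time-dependent controller} does not fail on $x$ follows from the loop invariant together with exactness: for constant dynamics $\Reach(t;x',t',q)=\{x'+(t-t')a_q\}$ is an exact singleton and $X_q^i$ is computed exactly (\cref{cor:constant_flow}), so at every step a valid switch exists and the search in lines~\ref{alg:start}--\ref{alg:end} finds it. Exactness is precisely where constancy of the dynamics enters: with inner/over approximations a realizable $x$ might fall outside the computed sets.

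\textbf{Minimal switching for constant dynamics, and the main obstacle.} Let $x_0\in\Init$ and let $\Init[q_0]^l$ be the set chosen in line~\ref{alg:init_mode}, so $x_0\notin\Init[q]^i$ for all $q$ and all $i<l$. I first claim $(x_0,0)\notin X_q^i$ for all $q$ and all $i<l$: otherwise take the least such $i$ with witness mode $q^{\ast}$; minimality gives $(x_0,0)\notin X_{q'}^{i-1}$ for all $q'$ (vacuously if $i=0$, since $X_{q^{\ast}}^{-1}=\emptyset$), so $x_0\in\Init[q^{\ast}]^i$ with $i<l$, a contradiction. Now if some $\pi'$ drove $x_0$ to satisfy $\varphi$ with $l'<l$ switches and initial mode $q_0'$, then \cref{def:state-time_set} would give $(x_0,0)\in X_{q_0'}^{l'}$ with $l'<l$, contradicting the claim; and the controller returned by \cref{alg:time-dependent controller} has exactly $l$ switch points because each $q_j$ supplied by \cref{thm:induc_state-time_set}(2) satisfies $q_j\neq q_{j-1}$. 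I expect the crux to be the soundness argument — showing that the \emph{greedy} controller extracted by \cref{alg:time-dependent controller}, rather than the abstract witness from \cref{def:state-time_set}, yields a \emph{globally} valid trajectory — which needs the loop invariant, the maintenance of $\phi_1$ across segment boundaries, and the observation that the reachable-set over-approximation only strengthens the guarantee; the remaining parts are then routine bookkeeping with \cref{cor:all_initial} and \cref{def:state-time_set}.
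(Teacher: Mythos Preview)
Your proof follows the same three-part strategy as the paper's: soundness by tracking state-time-set membership through the loop of \cref{alg:time-dependent controller}, relative completeness directly from \cref{def:state-time_set} (taking $k$ to be the switch count of the witnessing controller), and minimality from the ``smallest $l$'' selection in line~\ref{alg:init_mode}. The paper's own proof is extremely terse---one sentence per item---and your loop-invariant formulation for soundness and your contradiction argument for minimality spell out details the paper leaves implicit; in particular your remark that $q_j\neq q_{j-1}$ (needed so the output genuinely has $l$ discontinuities) and your reduction of ``$x_0\notin\Init[q]^i$ for all $q$ and $i<l$'' to ``$(x_0,0)\notin X_q^i$ for all $q$ and $i<l$'' are useful additions beyond what the paper writes.

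One point to flag: you describe the selection criterion of line~\ref{alg:next_switching_time} as ``reaches $X_{\widetilde q}^{l-j}$ \emph{while keeping $\phi_1$ up to $\widetilde t$}'' and then assert that any such choice preserves the invariant. But the algorithm as written only tests $\Reach(\widetilde t;x_{j-1},t_{j-1},q_{j-1})\subseteq X_{\widetilde q}^{l-j}[t=\widetilde t]$ and does not separately check $\phi_1$ along $[t_{j-1},\widetilde t]$; a flow could in principle leave $\phi_1$ and later land in some $X_{\widetilde q}^{l-j}$, so the $\phi_1$-part of your invariant is not justified by the stated selection rule alone. The paper's proof does not address this either---it simply says soundness ``follows directly from the definition''---so this is a subtlety shared with the paper rather than a defect unique to your argument.
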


    \begin{remark}
        Suppose the dynamic in each mode can be explicitly solved and there exists a decidable procedure for solving $\qe(\cdot)$ in \cref{eq:X_q^0,eq:X_q^i}, then \cref{alg:syn-ss} is also relatively complete and the corresponding controller also features minimal switching property.
    \end{remark}
\section{Experimental Evaluation}\label{sec:exp}

%\hscommentinline{1. Polish the expression of this section (Scalable)}
%\hscommentinline{2. Add example about general hybrid system}
%\hscommentinline{3. Rearrange the table - dimension,fixed-point}

    We develop a prototype\footnote{Available at \url{https://github.com/Han-SU/BenchMark_STLControlSyn4HS}} of our synthesis method in Python, employing the Z3 solver \cite{de2008z3} to explicitly compute the state-time sets for HSs with constant dynamics. For HSs with linear or polynomial dynamics, we use the semidefinite programming solver MOSEK \cite{mosek} to approximate the state-time set. The prototype is evaluated on various benchmark examples using a laptop with a 3.49GHz Apple M2 processor, 8GB RAM, and macOS 14.3.

    \begin{table*}[t]
        \centering
            %\vspace{-1cm}
            \captionsetup{font={scriptsize}}
            \caption{ST-RA Specifications}
            \vspace{-0.3cm}
            \label{tab:explain}
            \begin{center}
                \begin{tabular}{l  c l } 
                    \toprule
                    Model  & ~ & ST-RA Formulas \\
                    \midrule
                    \textsf{Reactor}\cite{zhao2013synthesizing}
                        & ~ & \scriptsize{\makecell[l]{$
                            \varphi~ : (10\!\le\! \textit{tempe} \!\le\! 90)\wedge(0\!\le\!\textit{cooling}\le\! 1) \until[15,20](40\!\le\!\textit{tempe} \!\le\! 50)$} }\\
                    \cmidrule{2-3}
                    \multirow{3}{*}{\textsf{WaterTank}\cite{raisch1999approximating}} 
                        & ~ & \scriptsize{\makecell[l]{
                            $\varphi_1:(10\le\!\textit{lev}_0\!\le\! 95)\!\wedge\!(10\!\le\!\textit{lev}_1\!\le 95)\!\wedge\!(|\textit{lev}_0 \!-\! \textit{lev}_1|\!\le\! 10) \until[50,60](50\!\le \! \textit{lev}_0\!\le\!80)$\\
                            $\qquad\wedge(50\!\le\!\textit{lev}_1\!\le\! 80)$}}\\
                    \specialrule{0em}{1pt}{1pt}
                        & ~ & \scriptsize{\makecell[l]{
                            $\varphi_2:(10\le\!\textit{lev}_0\!\le\! 95)\!\wedge\!(10\!\le\!\textit{lev}_1\!\le 95)\!\wedge\!(|\textit{lev}_0 \!-\! \textit{lev}_1|\!\le \!10) \until[30,40](50\!\le \! \textit{lev}_0\!\le\!80)$\\
                            $\qquad\wedge(50\!\le\!\textit{lev}_1\!\le\! 80)$}}\\
                    \specialrule{0em}{1pt}{1pt}
                        & ~ & \scriptsize{\makecell[l]{
                            $\varphi_3:(10\le\!\textit{lev}_0\!\le\! 95)\!\wedge\!(10\!\le\!\textit{lev}_1\!\le 95)\until[30,40](50\!\le\!\textit{lev}_0\!\le 80)\wedge(50\!\le\!\textit{lev}_1\!\le\! 80)$}}\\
                        \cmidrule{2-3}
                        \multirow{3}{*}{\textsf{CarSeq}\cite{bae2019bounded}} 
                        & ~ & \scriptsize{\makecell[l]{
                            $\varphi_1:(1\!\le\! \textit{pos}_0\!-\!\textit{pos}_1 \!\le \! 3)\until[2,3](20\!\le \! \textit{pos}_0\!\le\! 25)$}}\\
                    \specialrule{0em}{1pt}{1pt}
                        & ~ & \scriptsize{\makecell[l]{
                            $\varphi_2:(1\!\le\! \textit{pos}_0\!-\!\textit{pos}_1 \!\le \! 3)\!\wedge\!(1\!\le\!\textit{pos}_1\!-\!\textit{pos}_2)\until[2,3]\,(20\!\le\!\textit{pos}_0\!\le\! 25)$}}\\
                    \specialrule{0em}{1pt}{1pt}
                        & ~ & \scriptsize{\makecell[l]{
                            $\varphi_3:(1\!\le\! \textit{pos}_0\!-\!\textit{pos}_1 \!\le \! 3)\!\wedge\!(1\!\le\!\textit{pos}_1\!-\!\textit{pos}_2\!\le\!3 )\!\wedge\!(1\!\le\!\textit{pos}_2\!-\!\textit{pos}_3)\until[2,3]$\\
                            $\qquad(20\!\le\!\textit{pos}_0\!\le\! 25)$}}\\
                    \cmidrule{2-3}
                        \textsf{Oscillator}\cite{xue2023reach} 
                        & ~ & \scriptsize{\makecell[l]{
                            $\varphi~:(\textit{x}^2\!+\!\textit{y}^2\!\le\!1) \until[3,4] (\textit{x}^2\!+\! \textit{y}^2 \!\le\! 0.01)$}}\\
                    \cmidrule{2-3}
                        \multirow{3}{*}{\textsf{Temperature}\cite{bae2019bounded}} 
                        & ~ & \scriptsize{\makecell[l]{
                            $\varphi_1:\wedge_{i=1\!,2\!,3}(23\!\le\! \textit{temp}_i\!\!\le \! 29)\until[8,10]\!\wedge_{i=1\!,2\!,3}\!(26\!\le\!\textit{temp}_i\!\le\! 28)$}} \\
                    \specialrule{0em}{1pt}{1pt}
                        & ~ & \scriptsize{\makecell[l]{
                            $\varphi_2:\wedge_{i=1\!,2\!,3}(23\!\le\! \textit{temp}_i\!\!\le \! 29)\until[8,10]\!\wedge_{i=1\!,2\!,3}\!(26\!\le\!\textit{temp}_i\!\le\! 28)\!\wedge\!(\textit{temp}_2\!\le\!\textit{temp}_1)$}}\\
                    \specialrule{0em}{1pt}{1pt}
                        & ~ & \scriptsize{\makecell[l]{
                            $\varphi_3:\wedge_{i=1\!,2\!,3}(23\!\le\! \textit{temp}_i\!\!\le \! 29)\until[8,10]\!\wedge_{i=1\!,2\!,3}\!(26\!\le\!\textit{temp}_i\!\le\! 28)\!\wedge\!(\textit{temp}_2\!\le\!\textit{temp}_1)$\\
                            $\qquad\wedge (\textit{temp}_3\!\le\!\textit{temp}_2)$}}\\
                    \bottomrule
                \end{tabular}
            \end{center}
            \vspace*{-\baselineskip}
            \vspace*{1mm}
            \scriptsize{
                More detail explanation of the ST-RA formula can be found in \cref{appendix:detail}.    
            } 
            \vspace*{-7mm}
        \end{table*}

        \begin{table*}[t]
            \centering
                %\vspace{-1cm}
                \captionsetup{font={scriptsize}}
                \caption{Empirical results on benchmark examples }
                \vspace{-0.3cm}
                \label{tab:result}
                \begin{center}
                    \begin{tabular}{lcccc c ccc c lcc} 
                        \toprule
                        \multirow{2}{*}[-0.5ex]{Model} &~& \multirow{2}{*}[-0.5ex]{Dynamics}& ~ & \multirow{2}{*}[-0.5ex]{ST-RA} & ~ & \multicolumn{3}{c}{Model Scale} & ~ & \multicolumn{3}{c}{Synthesis Time} \\
                          \cmidrule{7-9}\cmidrule{11-13}
                           & & & & & ~ & $ n_{dim} $ & ~ & $n_{mode}$ & ~ & \#Iter. & ~ & Time (s) \\
                        \midrule
                        \multirow{3}{*}{\textsf{Reactor}\cite{zhao2013synthesizing}}& ~ &\multirow{3}{*}{Const} 
                            & &     $\varphi$ &~& 2 &~& 4 &~& 6 \fp &~& 0.31  \\
                            & & & & $\varphi$ &~& 2 &~& 8 &~& 6 \fp &~& 4.14  \\
                            & & & & $\varphi$ &~& 2 &~& 10 &~& 6 \fp &~& 8.01 \\
                        \cmidrule{3-13}
                        \multirow{3}{*}{\makecell[l]{\textsf{WaterTank}\cite{raisch1999approximating}}} & ~ & \multirow{3}{*}{Const} 
                            & &     $\varphi_1$ &~& 2 &~& 7 &~& 9 \fp &~& 18.04  \\
                            & & & & $\varphi_2$ &~& 2 &~& 7 &~& 6 \fp &~& 10.63  \\
                            & & & & $\varphi_3$ &~& 2 &~& 7 &~& 6 \fp &~& 5.24  \\
                        \cmidrule{3-13}
                        \multirow{3}{*}{\textsf{CarSeq}\cite{bae2019bounded}} & ~ & \multirow{3}{*}{Const} 
                            & &     $\varphi_1$ &~& 2 &~& 4  &~& 5 \fp &~& 1.12  \\
                            & & & & $\varphi_2$ &~& 3 &~& 8  &~& 7 \fp &~& 47.41  \\
                            & & & & $\varphi_3$ &~& 4 &~& 16 &~& 4 &~& 134.79 \\
                        \cmidrule{3-13}
                        \multirow{3}{*}{\textsf{Oscillator\cite{xue2023reach}}} & ~ & \multirow{3}{*}{Poly}
                            & &     $\varphi$ &~& 2 &~& 3  &~& 6 &~& 77.20 \\
                            & & & & $\varphi$ &~& 2 &~& 4  &~& 6 &~& 106.09 \\
                            & & & & $\varphi$ &~& 2 &~& 5 &~& 6 &~& 155.77 \\
                        \cmidrule{3-13}
                        \multirow{3}{*}{\textsf{Temperature}\cite{bae2019bounded}} & ~ & \multirow{3}{*}{Linear}
                            & &     $\varphi_1$ &~& 3 &~& 8  &~& 5 &~& 236.99 \\
                            & & & & $\varphi_2$ &~& 3 &~& 8  &~& 5 &~& 293.66 \\
                            & & & & $\varphi_3$ &~& 3 &~& 8  &~& 5 &~& 252.32 \\
                        \bottomrule
                    \end{tabular}
                \end{center}
                \vspace*{-\baselineskip}
                \vspace*{2mm}
                 
            \scriptsize{
                Dynamics: the type of continuous dynamics; ST-RA: formulas to be satisfied (cf. \cref{tab:explain});\\ 
                $n_{dim}$: dimension of state; $n_{mode}$: number of modes; \#Iter.: number of iterations, $\fp$ means the synthesized set $X_q^i$ (cf. \cref{sec:prob1}) reach a fixpoint at current iteration.
            }
            \vspace*{-6mm}
        \end{table*}

    As shown in \cref{tab:explain}, our experiments involve five distinct models, with three exhibiting constant dynamics and two exhibiting non-constant dynamics. We adjust the model scale or the ST-RA formula for each model to assess the efficacy of our method under varying conditions. In total, 15 different benchmarks are included in our study. \cref{tab:result} details the empirical results of the benchmarks. In each case, the synthesis process continue iterating until either a fixpoint is achieved or the maximum calculation time of 5 minutes is met. 

    Our empirical results illustrate that our method is capable of effectively synthesizing controllers for models with both constant and non-constant dynamics. Notably, for models with constant dynamics, the iterative process tends to converge to a fixpoint, meaning that a complete controller is achieved. Moreover, the synthesis time for these controllers is significantly influenced by both the scale of the model and the complexity of ST-RA formulas. Specially, our analysis reveals:
    \begin{enumerate*}[label=(\roman*)]
        \item an increased number of modes (\textsf{Reactor}) or a higher state dimension (\textsf{CarSeq}) both lead to prolonged synthesis times,
        \item more intricate predicates or larger future-reach time\footnote{Future-reach time refers to the maximum time horizon required to verify the correctness of an STL formula \cite{bae2019bounded}, in \textsf{WaterTank}, the future-reach times of $\varphi_1$, $\varphi_2$, and $\varphi_3$ are $60$, $40$, and $40$ respectively.} (\textsf{WaterTank}) results in increased synthesis times.
    \end{enumerate*}
    For the third benchmark within \textsf{CarSeq}, the model does not reach a fixpoint, primarily because the large model scale rapidly increase the formula size, posing substantial challenges for the Z3 solver.

    When dealing with non-constant dynamics, an approximation method is applied, thereby a fixpoint might not be achievable. The influence of model scale on synthesis time remains consistent with that observed in constant ODE models, as evidenced in \textsf{Oscillator}. Interestingly, the synthesis time for controllers using approximation methods is less affected by the complexity of the ST-RA formula. For example, in \textsf{Temperature}, despite $\varphi_3$ being more complex than $\varphi_2$, it requires less synthesis time, primarily because the complexity of SDP is influenced more by state space dimensions than by constraints.  

    Overall, our method exhibits a high capability in synthesizing switching controller for HSs with various dynamics. It can achieve sound and complete results for constant dynamics within a reasonable time. For more general dynamics, our method can still synthesize a sound result in a reasonable time.

\section{Related Work}\label{sec:related}

    %\hscommentinline{Add the discussion about relation of our method with backwards induction of timed gaming }

    HSs have been a key research focus in the academic community\cite{witsenhausen1966class}. The autonomous \emph{verification} and \emph{synthesis} of HSs began from timed automata \cite{alur1994theory}. Subsequently, various mathematical models, including hybrid automata \cite{henzinger1996theory,henzinger2000symbolic} and various types of differential equations, have been employed to reason about HSs. For a survey of these methods, refer to \cite{deshmukh2019formal}.

    In the realm of formal synthesis of HSs, different methods can be classified along several dimensions.
    \begin{enumerate*}[label=(\roman*)]
        \item Along the designable part of the system, the synthesis problem can be categorized into \emph{feedback controller} synthesis \cite{sanfelice2021hybrid}, \emph{switching controller} synthesis \cite{liberzon2003switching,jha2011synthesis}, and \emph{reset controller} synthesis \cite{clegg1958nonlinear}. 
        \item Along the properties of interest, the problem can be classified into \emph{safety} controller synthesis, \emph{liveness} controller synthesis, and so on.  
    \end{enumerate*}

    Switching controller synthesis \cite{liberzon2003switching}, shaping HSs by strategically constraining their discrete behavior, can be categorized into two fundamentally approaches. The first is based on constraint solving \cite{taly2011synthesizing,zhao2013synthesizing}. This approach highly dependents on finding suitable certificate templates, which is challenging to generate manually. The other approach is abstraction-based method. Given its capability to easily handle complex temporal specifications, this method has been increasingly adopted in recent research \cite{liu2013synthesis,aydin2012language,girard2012controller}.
    %The synthesis of HSs based on temporal logic can be traced back to the seminal work in \cite{tabuada2003model}, which proposed a method by abstracting the discrete-time linear system to a transition system to analyze the system's behavior with respect to linear temporal logic (LTL). Several works in control synthesis followed this line, such as \cite{runggercontroller,wolff2016optimal}.

    The synthesis of HSs concerning reach-avoid type specifications, similar to those discussed in this paper, predominantly focuses on feedback controllers. Notable methods include the Counterexample-Guided Inductive Synthesis (CEGIS) approach proposed by Hadi and Sriram \cite{ravanbakhsh2015counterexample,ravanbakhsh2016robust}, optimization-based methods \cite{xue2023reach}, and others \cite{franzle2019memory,prajna2007convex}.

    When considering STL as the specification, most works have focused solely on the continuous dynamics of HSs. Raman et al. proposed a method to encode the STL specification of a hybrid system into Mixed Integer Linear Programming (MILP) \cite{raman2014model}. This method was employed to synthesize a robust controller in a CEGIS manner in \cite{raman2015reactive}. Synthesizing a controller by reinforcement learning technique for an essential discrete-time system is also introduced recently \cite{meng2023signal}. The Control Barrier Function-based method can also be used to synthesize feedback controller with respect to STL, without requiring discretization of the continuous system \cite{lindemann2018control}. 
    %Using a candidate time-varying barrier function \cite{igarashi2019time}, the controller that renders the system to satisfy the STL specification can be computed by quadratic optimization. 
    While \cite{da2021symbolic} is the only work we know that considers synthesizing switched systems with respect to STL specification, the synthesized part is the switch input for the hybrid automata with discrete dynamics. In contrast, our work is aimed at synthesizing switching controllers that determine the switch time for the system.
    
    Although numerous studies \cite{ravanbakhsh2015counterexample,ravanbakhsh2016robust} address the reach-avoid type specifications of hybrid systems discussed in this paper, the majority of them focus on feedback controllers rather than switching controllers.
\section{Conclusion}\label{sec:conclu}

    We proposed a novel method to synthesize switching controllers for HSs against a fragment of the STL. Our method iteratively calculates the state-time set for each mode, which services as foundation of the synthesize algorithm. The distinctive feature of our approach lies in its soundness and relative completeness. %ensuring that these sets enable the resulting executions to meet the STL specification within a certain number of switches. The results of these calculations are instrumental in deriving switching controller. 
    Our preliminary experiments, leveraging a range of notable examples from existing literature, have effectively demonstrated the method's efficiency and efficacy.

    For future work, we plan to continue to explore in two directions.
    \begin{enumerate*}[label = (\roman*)]
        \item Enlarge the range of specifications under consideration to encompass general STL formulas featuring nested temporal operators. The primary challenge here is devising a unified, recursive formula reasoning approach for general STL specifications.
        \item Broaden the types of controllers that can be synthesized from the calculated state-time sets.
%        \item Examine more general hybrid systems that exhibit increasingly complex dynamics in each mode. Given that our current method is contingent on the analytical solutions of ODEs, investigating ways to reduce this dependency presents a significant and worthwhile challenge.
    \end{enumerate*}

\bibliographystyle{abbrv}
\bibliography{reference}

\appendix
\newpage
\section{Proofs of Lemmas and Theorems}\label{apendix:proof}

\restateAllInitial*

\begin{proof}[proof of \cref{cor:all_initial}]
 (1) By definition of the state-time sets, $X_q^0 \subseteq X_q^1 \subseteq X_q^2 \subseteq \cdots $ trivially holds. (2) For any $x \in  X_i^q [t \repl 0]$, by definition of state-time sets, we have $(\xx,0) \models \phi_1\mathcal{U}_{I}\phi_2$ where $\xx$ is the solution of $\dot{\xx}(t) = f_{\eta(t) } (\xx(t), t) $ over $[0,\infty)$ with $\xx(0) = x$ for some controller $\eta$. (3) This trivially holds from (1).
    \qed
\end{proof}

\bigskip

Before proving \cref{thm:induc_state-time_set}, we first prove a lemma:

\begin{lemma}\label{lem:eq-init-induc}
            For a given ST-RA formula $\varphi=\phi_1\until[I]\phi_2$, a signal $\Traj[]$, and time instant $\tau$,
            \begin{align*}
                \left(\Traj[],\tau \models \phi_1\until_{I\Pminus \tau}\phi_2\right)\iff \left(\Traj[],\tau \models \phi_1\until[](\phi_2\wedge t\in I)\right)
            \end{align*}
        \end{lemma}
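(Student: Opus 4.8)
The plan is to prove the biconditional by unfolding the ST-RA semantics on both sides and observing that each side reduces to one and the same existential statement about a witnessing time $\tau'$. Recall that $(\xx,\tau)\models\phi_1\,\mathcal{U}_J\,\phi_2$ holds iff there is some $\tau'\ge\tau$ with $\tau'-\tau\in J$ such that $(\xx,\tau')\models\phi_2$ and $(\xx,\tau'')\models\phi_1$ for every $\tau''\in[\tau,\tau']$, and that the unsubscripted $\mathcal{U}$ abbreviates $\mathcal{U}_{[0,\infty)}$; write $I=[l,u]$.

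First I would treat the left-hand side. Its defining condition asks for $\tau'\ge\tau$ with $\tau'-\tau\in I\Pminus\tau=[l-\tau,u-\tau]\cap\NonNegReals$. I would show that, whatever the signs of $l-\tau$ and $u-\tau$, this pair of constraints is equivalent to $\max(\tau,l)\le\tau'\le u$: the membership $\tau'-\tau\in[l-\tau,u-\tau]\cap\NonNegReals$ unwinds to $\tau'-\tau\ge\max(0,l-\tau)$ and $\tau'-\tau\le u-\tau$, i.e. $\tau'\ge\max(\tau,l)$ and $\tau'\le u$ (when $u<\max(\tau,l)$ no such $\tau'$ exists, and the same reformulation makes the right-hand side vacuous as well). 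Hence the left-hand side holds iff there is $\tau'$ with $\max(\tau,l)\le\tau'\le u$, $(\xx,\tau')\models\phi_2$, and $(\xx,\tau'')\models\phi_1$ for all $\tau''\in[\tau,\tau']$.

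Next I would unfold the right-hand side: $(\xx,\tau)\models\phi_1\,\mathcal{U}\,(\phi_2\wedge t\in I)$ means there is $\tau'\ge\tau$ with $(\xx,\tau')\models\phi_2$, $l\le\tau'\le u$, and $(\xx,\tau'')\models\phi_1$ for all $\tau''\in[\tau,\tau']$, using that $\phi_2\wedge t\in I$ holds at $(\xx,\tau')$ exactly when $(\xx,\tau')\models\phi_2$ and $l\le\tau'\le u$. Since $\tau'\ge\tau$ together with $l\le\tau'\le u$ is precisely $\max(\tau,l)\le\tau'\le u$, this is literally the same existential statement as the one obtained for the left-hand side. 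The two formulas therefore have identical models, which is the claimed equivalence.

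I expect the only point needing care — and the one I would spell out explicitly — to be the case analysis concealed in $I\Pminus\tau$: whether $l-\tau$ is negative and whether the truncated interval is empty. The uniform reformulation through $\max(\tau,l)$ dispatches every case at once, so this amounts to bookkeeping rather than a genuine obstacle.
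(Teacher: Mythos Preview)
Your proposal is correct and follows essentially the same approach as the paper: both unfold the semantics of $\until$ on each side and show that the constraint $\tau'\ge\tau\wedge\tau'-\tau\in I\Pminus\tau$ collapses to $\tau'\ge\tau\wedge\tau'\in I$, after which the two sides are syntactically identical. Your use of $\max(\tau,l)$ makes the case analysis a bit more explicit than the paper's chain of equivalences, but the argument is the same.
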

        \begin{proof}
        The following equivalent conditions hold:
            \begin{align*}
                           &\Traj[],\tau \models \phi_1\until_{I\Pminus \tau}\phi_2  \\
                \iff & \exists \tau'\ge \tau,~ \tau'-\tau\in I\Pminus \tau,(\Traj[],\tau')\models \phi_2,\text{and } \forall \tau''\in[\tau,\tau'],~ (\Traj[],\tau'')\models\phi_1\\
                \iff &  \exists \tau'\ge \tau,~ \tau'-\tau\in I-\tau\cap\NonNegReals,(\Traj[],\tau')\models \phi_2,\text{and } \forall \tau''\in[\tau,\tau'],~ (\Traj[],\tau'')\models\phi_1\\
                \iff &  \exists \tau'\ge \tau,~ \tau'\in I, (\Traj[],\tau')\models \phi_2,\text{and } \forall \tau''\in[\tau,\tau'],~ (\Traj[],\tau'')\models\phi_1\\
                \iff &  \exists \tau'\ge \tau,~ (\Traj[],\tau')\models (\phi_2\wedge t\in I),\text{and } \forall \tau''\in[\tau,\tau'],~ (\Traj[],\tau'')\models\phi_1\\
                \iff & \Traj[],\tau\models \phi_1\until (\phi_2\wedge t\in I)
            \end{align*}
            This complete the proof. \qed
        \end{proof}

\bigskip

\restateInducStateTimeSet*
\begin{proof}[proof of \cref{thm:induc_state-time_set}]
(1) By definition of the state-time sets, $(x,\tau)\in X_0^q$ iff $(\xx, \tau) \models \phi_1\until_{I\Pminus \tau}\phi_2 $ where $\xx$ is the solution of ODE $\dot{\xx}(t) = f_{q } (\xx(t), t) $ over $[\tau,\infty)$ with $\xx(\tau) = x$. Moreover, by \cref{lem:eq-init-induc}, we have 
\[\phi_1\, \mathcal{U}\, (\phi_2\wedge t\in I) \equiv \phi_1\until_{I}\phi_2 .\] 
The result follows directly from the above two facts.

(2) ``$\Rightarrow$": By definition of the state time set $X_q^i$, if $(x,\tau) \in X_q^i$, then there exists a switching controller $\pi \from [\tau,\infty )\to Q$, such that $\pi(\tau) = q$, and $\pi$ contains at most $i$ discontinuous points. Moreover $(\xx, \tau) \models \phi_1\until_{I\Pminus \tau}\phi_2 $, where $\xx$ is the solution induced by $\pi$ with initial $x$ at time $\tau$. Let 
\[\tau' \defeq \inf \{t\mid \pi(t) \neq q\}\] 
denote the first switching time of $\pi$, $\pi' \defeq \pi |_{[\tau',\infty)}$ denote the restriction of $\eta$ on $[\tau',\infty)$, and $q' \defeq \pi(\tau')$ the mode at time $\tau'$. It suffice to show 
\[(\Psi(\cdot\,; x, \tau, q),\tau) \models  \phi_1\, \mathcal{U}\, X_{q'}^{i-1},\] 
where $\Psi(\cdot\,; x, \tau, q)$
is the solution of $ \dot{\xx}(t) = f_q (\xx(t)) $ with initial $x$ at time $\tau$. 
In fact, since $(\xx, \tau) \models \phi_1\until_{I\Pminus \tau}\phi_2 $  under $\pi$ with $\pi(\tau) = q$, we have 
\[(\Psi(\cdot\,; x, \tau, q),\tau) \models  \phi_1\, \mathcal{U}\, \{ (\Psi(\tau'\,; x, \tau, q), \tau') \}.\] Thus, we only need to show 
\[(\Psi(\tau'\,; x, \tau, q), \tau') \in X_{q'}^{i-1}\]
Since $(\xx, \tau) \models \phi_1\until_{I\Pminus \tau}\phi_2 $, we have, by \cref{lem:eq-init-induc}, $(\xx, \tau) \models \phi_1\until(\phi_2\wedge t\in I) $, this implies 
\[(\Psi(\cdot \,; x, \tau, q), \tau') \models \phi_1\until(\phi_2\wedge t\in I) .\] 
Again, by \cref{lem:eq-init-induc}, we have 
\[(\Psi(\cdot\,; x, \tau, q), \tau') \models \phi_1\until_{I\Pminus \tau'}(\phi_2) .\] 
Moreover, since $\pi'$ is the restriction of $\pi$ over $[\tau',\infty)$, it satisfies $\pi'(\tau')=q'$ and contains at most $i-1$ discontinuous points. This implies $(\Psi(\tau'\,; x, \tau, q), \tau') \in X_{q'}^{i-1}$ and completes the proof.

``$\Leftarrow$": Suppose there exists $q'\neq q$, such that $(\Psi(\cdot \,; x, \tau, q), \tau) \models \phi_1\until X_{q'}^{i-1} $, then there exists time $\tau'$, satisfying
\[ (\Psi(\tau'\,; x, \tau, q), \tau) \in X_{q'}^{i-1}\quad \text{ and } \quad (\Psi(t\,; x, \tau, q), \tau) \models \phi_1 \text{ for any } \tau\leq t \leq \tau'.\]
Since $(\Psi(\tau'\,; x, \tau, q), \tau) \in X_{q'}^{i-1}$, there exists a controller $\pi'\from [\tau',\infty) \to Q$ with at most $i-1$ discontinuous points,  such that $\pi'(\tau')= q'$, and  $(\xx, \tau') \models \phi_1\until_{I\Pminus \tau'}\phi_2 $, where $\xx$ is the solution induced by $\pi'$ with initial $\Psi(\tau'\,; x, \tau, q)$ at time $\tau'$. Let 
\[
\pi(x)(t) = \begin{cases}
    q & \text{if } \tau\leq t < \tau',\\
    \pi'(x)(t) & \text{if } t \geq  \tau'.
\end{cases}
\]
It is direct to check $\pi(x)$ has at most $i$ discontinuous points. Moreover, Since $(\xx, \tau') \models \phi_1\until_{I\Pminus \tau'}\phi_2 $, we have, by \cref{lem:eq-init-induc},
\[
(\xx, \tau') \models \phi_1\until (\phi_2 \wedge t\in I),
\]
thus 
\[
(\xx, \tau) \models \phi_1\until (\phi_2 \wedge t\in I),
\]
holds as $(\Psi(t\,; x, \tau, q), \tau) \models \phi_1$  for any $ \tau\leq t \leq \tau'$, which further implies $(\xx, \tau) \models \phi_1\until_{I\Pminus \tau}\phi_2 $ where $\xx$ is the solution induced by $\pi$ with initial $x$ at time $\tau$. This completes the proof. \qed
\end{proof}

\bigskip

\restateComputeTimeStateSet*
\begin{proof}[proof of \cref{thm:compute_time-state_set}]
    \cref{eq:X_q^0,eq:X_q^i} are direct translation of \cref{eq:induc_base,eq:induc_i}. Taking \cref{eq:X_q^0} as an example, 
    \[\phi_2[x,t\repl\Psi(t+ \delta;x, t, q),t+\delta] \wedge (t + \delta \in I)\]
    encapsulates the condition that the system must fulfills $(\phi_2\wedge t\in I)$ at some time $t+ \delta$, and 
    \[ \forall 0\leq h\leq \delta,\, \phi_1 [x,t\repl\Psi(t+ h;x, t, q),t+h]\]
    encodes the requirement that the system must satisfies $\phi_1$ at every instant within the interval $[t, t+h]$. 
    Similar arguments holds for \cref{eq:X_q^i}. This completes the proof. \qed
\end{proof}

\bigskip

\restateConstantFlow*
\begin{proof}[proof of \cref{cor:constant_flow}]
    Since $f_q = a_q$ for any $q\in Q$, the solution of ODE $ \dot{\xx}(t) = f_q (\xx(t)) $ with initial $x$ at time $\tau$ is solved by $ \Psi(t; x, \tau, q) = x + (t-\tau)\cdot a_q$. Moreover, since $\phi_1$ and $\phi_2$ are Boolean combinations of polynomial inequalities, \cref{eq:X_q^0,eq:X_q^i} are first order formulas
over real fields, which exists an explicit algorithm~\cite{arnon1984cylindrical} to compute $\qe$. \qed
\end{proof}

\bigskip

\restateTheoretic*
\begin{proof}[proof of \cref{thm:theoretic}]
    (1) The soundness follows directly from the definition of $\Init$ and $\pi$ in \cref{alg:syn-ss,alg:time-dependent controller}: for any $x_0\in \Init$, \cref{alg:time-dependent controller} iteratively finds the next state-time set that the system will enter, and extracts the corresponding switching time and discrete mode.

    (2) Suppose $x$ can be driven to satisfy $\varphi$ with controller $\pi$, let $k$ denote the the number of discontinuous points of $\pi(x)$, then the initial set $\Init$ synthesized by \cref{alg:syn-ss} contains $x$, since $\Init$ contains all state that can be driven to satisfy $\varphi$ within $k$ times of switches (cf. \cref{cor:all_initial}).

    (3) Notice \cref{alg:time-dependent controller} finds the smallest $l$, such that there exists $q\in Q$, $\Init(q)^l$ contains $x$ (line \ref{alg:init_mode}), $x$ can be driven to satisfy $\varphi$ with \emph{at least} $l$ times of switching by definition of $\Init(q)^l$. Moreover, the controller $\pi(x)$ synthesized by \cref{alg:time-dependent controller} switches $l$ time, thus $\pi(x)$ features minimal switching time property.
\end{proof}

\section{Detail of Experimental Evaluation}\label{appendix:detail}

    \begin{table*}[t]
        \centering
            \captionsetup{font={scriptsize}}
            \caption{Detail Explanation of ST-RA Specifications}
            \vspace{-0.3cm}
            \label{tab:detail-explain}
            \begin{center}
                \begin{tabular}{l  c p{3.5cm} c p{6.4cm}} 
                    \toprule
                    Model  & ~ & ST-RA Formulas & ~ & Detail Explanation  \\
                    \midrule
                    \textsf{Reactor}\cite{zhao2013synthesizing}
                        & ~ & \scriptsize{\makecell[l]{$
                            \varphi : (10\!\le\! \textit{tempe} \!\le\! 90)\wedge(0\!\le$\\ 
                            $\quad~\textit{cooling}\le\! 1) \until[15,20](40\!\le$\\ 
                            $\quad~\textit{tempe} \!\le\! 50)$} }
                        & ~ & 
                        \scriptsize{
                            \makecell[l]{
                            The reactor's \textbf{temperature} will remain between \textbf{10}\\
                            and \textbf{90}, with \textbf{cooling power} between \textbf{0} and \textbf{1}, \emph{until}\\ 
                            a certain moment between \textbf{15} and \textbf{20}. At that poi-\\
                            nt, the \textbf{temperature} will reach between \textbf{40} and \textbf{50}.}
                            }
                            \\
                    \cmidrule{2-5}
                    \multirow{3}{*}[-1.3cm]{\textsf{WaterTank}\cite{raisch1999approximating}} 
                        & ~ & \scriptsize{\makecell[l]{
                            $\varphi_1:(10\le\!\textit{lev}_0\!\le\! 95)\!\wedge\!(10\!\le$\\
                            $\qquad\textit{lev}_1\!\le 95)\!\wedge\!(|\textit{lev}_0 \!-\! \textit{lev}_1|$\\ 
                            $\qquad\le 10) \until[50,60](50\!\le \! \textit{lev}_0\!\le$\\ 
                            $\qquad 80)\wedge(50\!\le\!\textit{lev}_1\!\le\! 80)$}} 
                        & ~ &  
                        \scriptsize{
                        \makecell[l]{
                            The water \textbf{level} in each of a \textbf{double}-watertanks syst-\\
                            em will be between \textbf{10} and \textbf{95}, with a \textbf{difference} of\\ 
                            less than \textbf{10} in between, \emph{until} a moment between\\ 
                            \textbf{50} and \textbf{60}. At that time, both tanks will have water\\ 
                            \textbf{levels} between \textbf{50} and \textbf{80}.
                        }
                        }\\
                    \specialrule{0em}{0pt}{0pt}
                        & ~ & \scriptsize{\makecell[l]{
                            $\varphi_2:(10\le\!\textit{lev}_0\!\le\! 95)\!\wedge\!(10\!\le$\\
                            $\qquad\textit{lev}_1\!\le 95)\!\wedge\!(|\textit{lev}_0 \!-\! \textit{lev}_1|$\\ 
                            $\qquad\le 10) \until[30,40](50\!\le \! \textit{lev}_0\!\le$\\ 
                            $\qquad 80)\wedge(50\!\le\!\textit{lev}_1\!\le\! 80)$}} 
                        & ~ &  
                        \scriptsize{
                        \makecell[l]{
                            The water \textbf{level} in each of a \textbf{double}-watertanks syst-\\
                            em will be between \textbf{10} and \textbf{95}, with a \textbf{difference} of\\ 
                            less than \textbf{10} in between, \emph{until} a moment between\\ 
                            \textbf{30} and \textbf{40}. At that time, both tanks will have water\\ 
                            \textbf{levels} between \textbf{50} and \textbf{80}.
                        }
                        }\\
                    \specialrule{0em}{0pt}{0pt}
                        & ~ & \scriptsize{\makecell[l]{
                            $\varphi_3:(10\le\!\textit{lev}_0\!\le\! 95)\!\wedge\!(10\!\le$\\
                            $\qquad\textit{lev}_1\!\le 95)\until[30,40](50\!\le$\\
                            $\qquad\textit{lev}_0\!\le 80)\wedge(50\!\le\!\textit{lev}_1\!\le$\\ 
                            $\qquad 80)$}} 
                        & ~ &  
                        \scriptsize{
                        \makecell[l]{
                            The water \textbf{level} in each of a \textbf{double}-watertanks syst-\\
                            em will be between \textbf{10} and \textbf{95}, \emph{until} a moment betw-\\
                            een \textbf{30} and \textbf{40}. At that time, both tanks will have\\ 
                            water \textbf{levels} between \textbf{50} and \textbf{80}.
                        }
                        }\\
                        \cmidrule{2-5}
                        \multirow{3}{*}[-0.8cm]{\textsf{CarSeq}\cite{bae2019bounded}} 
                        & ~ & \scriptsize{\makecell[l]{
                            $\varphi_1:(1\!\le\! \textit{pos}_0\!-\!\textit{pos}_1 \!\le \! 3)\until[2,3]$\\
                            $\qquad(20\!\le \! \textit{pos}_0\!\le\! 25)$}} 
                        & ~ &  
                        \scriptsize{
                        \makecell[l]{
                            In a \textbf{two}-car sequence, the first car will be \textbf{1} to \textbf{3}\\
                            meters \textbf{ahead} of the second until a moment between\\ 
                            \textbf{2} and \textbf{3}. At that time, its \textbf{position} will be between\\ 
                            \textbf{20} and \textbf{25} meters.
                        }
                        }\\
                    \specialrule{0em}{0pt}{0pt}
                        & ~ & \scriptsize{\makecell[l]{
                            $\varphi_2:(1\!\le\! \textit{pos}_0\!-\!\textit{pos}_1 \!\le \! 3)\!\wedge\!(1$\\
                            $\qquad\le\!\textit{pos}_1\!-\!\textit{pos}_2)\until[2,3]\,(20$\\
                            $\qquad\le\!\textit{pos}_0\!\le\! 25)$}} 
                        & ~ &  
                        \scriptsize{
                            \makecell[l]{
                            In a \textbf{three}-cars sequence, the middle car should mai-\\
                            ntain a distance of \textbf{1} to \textbf{3} meters behind the first car,\\
                            and the \textbf{last} car should remain \textbf{1} meter \textbf{behind} the\\ 
                            middle one, until a moment between \textbf{2} and \textbf{3} seconds.\\ 
                            Subsequently, the position of the lead car will range\\ 
                            between \textbf{20} and \textbf{25} meters.
                        }
                        }\\
                    \specialrule{0em}{0pt}{0pt}
                        & ~ & \scriptsize{\makecell[l]{
                            $\varphi_3:(1\!\le\! \textit{pos}_0\!-\!\textit{pos}_1 \!\le \! 3)\!\wedge\!(1$\\
                            $\qquad\le\!\textit{pos}_1\!-\!\textit{pos}_2\!\le\!3 )\!\wedge\!(1\!\le$\\
                            $\qquad\textit{pos}_2\!-\!\textit{pos}_3)\until[2,3]\,(20\!\le$\\
                            $\qquad\textit{pos}_0\!\le\! 25)$}} 
                        & ~ &  
                        \scriptsize{
                            \makecell[l]{
                            In a \textbf{four}-cars sequence, the middle \textbf{two} cars should\\ 
                            maintain a distance of \textbf{1} to \textbf{3} meters behind the pre-\\
                            ceding car, and the \textbf{last} car should remain \textbf{1} meter\\ 
                            \textbf{behind} the one in front, until a moment between \textbf{2}\\ 
                            and \textbf{3} seconds. Subsequently, the position of the lead\\ 
                            car will range between \textbf{20} and \textbf{25} meters.
                        }
                        }\\
                    \cmidrule{2-5}
                        \textsf{Oscillator}\cite{xue2023reach} 
                        & ~ & \scriptsize{\makecell[l]{
                            $\varphi:(\textit{x}^2\!+\!\textit{y}^2\!\le\!1) \until[3,4] (\textit{x}^2\!+\! \textit{y}^2 $\\
                            $\quad~ \le\! 0.01)$}} 
                        & ~ &  
                        \scriptsize{
                        \makecell[l]{
                            In a \textbf{two-dimensional} Van der Pol Oscillator system,\\ 
                            the oscillator remains within the \textbf{unit circle} until a \\
                            moment between \textbf{3} and \textbf{4} seconds. At that time, the\\ 
                            position of the oscillator will be in a circle centered at\\ 
                            the origin with a radius of \textbf{0.1}.
                        }
                        }\\
                    \cmidrule{2-5}
                        \multirow{3}{*}[-0.8cm]{\textsf{Temperature}\cite{bae2019bounded}} 
                        & ~ & \scriptsize{\makecell[l]{
                            $\varphi_1:\wedge_{i=1\!,2\!,3}(23\!\le\! \textit{temp}_i\!\!\le \! 29)$ \\ 
                            $\qquad\until[8,10]\!\wedge_{i=1\!,2\!,3}\!(26\!\le\!\textit{temp}_i$\\ 
                            $\qquad\le\! 28)$}} 
                        & ~ &  
                        \scriptsize{
                        \makecell[l]{
                            In a \textbf{three-room} temperature control system, the tem-\\
                            perature in each room will range from \textbf{23} to \textbf{29} degrees\\ 
                            Celsius until a moment between \textbf{8} and \textbf{10} seconds. At\\ 
                            that time, the temperature in all three rooms will be be-\\
                            tween \textbf{26} and \textbf{28} degrees Celsius.
                        }
                        }\\
                    \specialrule{0em}{0pt}{0pt}
                        & ~ & \scriptsize{\makecell[l]{
                            $\varphi_2:\wedge_{i=1\!,2\!,3}(23\!\le\! \textit{temp}_i\!\!\le \! 29)$ \\ 
                            $\qquad\until[8,10]\!\wedge_{i=1\!,2\!,3}\!(26\!\le\!\textit{temp}_i$\\ 
                            $\qquad\le\! 28)\!\wedge\!(\textit{temp}_2\!\le\!\textit{temp}_1)$}} 
                        & ~ &  
                        \scriptsize{
                            \makecell[l]{
                                In a \textbf{three-room} temperature control system, the tem-\\
                                perature in each room will range from \textbf{23} to \textbf{29} degrees\\ 
                                Celsius until a moment between \textbf{8} and \textbf{10} seconds. At\\ 
                                that time, the temperature in all three rooms will be be-\\
                                tween \textbf{26} and \textbf{28} degrees Celsius, and the temperature\\ 
                                of the first room will be higher than the second one. 
                        }
                        }\\
                    \specialrule{0em}{0pt}{0pt}
                        & ~ & \scriptsize{\makecell[l]{
                            $\varphi_3:\wedge_{i=1\!,2\!,3}(23\!\le\! \textit{temp}_i\!\!\le \! 29)$ \\ 
                            $\qquad\until[8,10]\!\wedge_{i=1\!,2\!,3}\!(26\!\le\!\textit{temp}_i$\\ 
                            $\qquad\le\! 28)\!\wedge\!(\textit{temp}_2\!\le\!\textit{temp}_1)$\\
                            $\qquad\wedge (\textit{temp}_3\!\le\!\textit{temp}_2)$}} 
                        & ~ &  
                        \scriptsize{
                            \makecell[l]{
                                In a \textbf{three-room} temperature control system, the tem-\\
                                perature in each room will range from \textbf{23} to \textbf{29} degrees\\ 
                                Celsius until a moment between \textbf{8} and \textbf{10} seconds. At\\ 
                                that time, the temperature in all three rooms will be be-\\
                                tween \textbf{26} and \textbf{28} degrees Celsius, and the temperature\\ 
                                of the first room will be higher than the second one, the\\ 
                                temperature of the second room will be higher than the\\ 
                                last one. 
                        }
                        }\\
                    \bottomrule
                \end{tabular}
            \end{center}
            \vspace*{-\baselineskip}
            \vspace*{-6mm}
            %\scriptsize{
                %time1: time of synthesizing switched system (\cref{alg:time-relevant-syn}); time2: time of synthesizing switched hybrid automaton (\cref{alg:switch-syn}) 
              %  } %\vspace*{-5mm}
        \end{table*}

\end{document}